\documentclass[11pt,a4paper]{article}
\usepackage[utf8]{inputenc}
\usepackage[T1]{fontenc}
\usepackage[english]{babel}
\usepackage{amsmath, amssymb, amsthm}
\usepackage{mathtools}
\usepackage{physics}
\usepackage{graphicx}
\usepackage{float}
\usepackage{booktabs}
\usepackage{caption}
\usepackage{microtype}
\usepackage[colorlinks=true, linkcolor=blue!70!black, citecolor=teal, urlcolor=blue!70!black]{hyperref}
\usepackage[margin=1in]{geometry}

\usepackage{authblk}

\usepackage{tikz}
\usetikzlibrary{automata, positioning, arrows.meta, decorations.pathmorphing, calc, decorations.markings, shadows}

\tikzset{
	state/.style={
		circle, 
		draw=black, 
		thick, 
		minimum size=1.3cm, 
		inner sep=0pt, 
		font=\small\bfseries,
		fill=white,
		drop shadow={opacity=0.1}
	},
	final/.style={
		double, 
		draw=black, 
		thick, 
		minimum size=1.3cm,
		fill=white
	},
	transition/.style={
		->, 
		>={Stealth[length=2.5mm]}, 
		thick, 
		draw=black!85
	},
	edge label/.style={
		midway,
		above,
		font=\footnotesize,
		color=black!90,
		inner sep=2pt
	}
}

\newtheorem{theorem}{Theorem}
\newtheorem{lemma}{Lemma}
\newtheorem{proposition}{Proposition}

\newtheorem{remark}{Remark}

\title{\textbf{Matrix bordering structure of the Faddeev-Jackiw algorithm: kernel reduction and symbolic automation}}

\author[1]{E. Chan--López}
\author[2]{A. Martín-Ruiz\thanks{Corresponding author: \texttt{alberto.martin@nucleares.unam.mx}}}
\author[1]{Jaime Manuel Cabrera}
\author[1]{Jorge Mauricio Paulin Fuentes}

\affil[1]{División Académica de Ciencias Básicas, Universidad Juárez Autónoma de Tabasco (UJAT), Villahermosa, Tabasco, México}
\affil[2]{Instituto de Ciencias Nucleares, Universidad Nacional Autónoma de México (ICN-UNAM), Ciudad de México, México}

\date{\today}

\begin{document}
	
	\maketitle

	\begin{abstract}
		\noindent
		We show that the iterative Faddeev-Jackiw (FJ) reduction for singular Lagrangian systems constitutes a \textit{geometrically constrained instance} of the Matrix Bordering Technique (MBT). For a first-order Lagrangian with singular pre-symplectic form, each iteration of the Barcelos-Neto-Wotzasek algorithm produces an extended symplectic matrix of canonical bordered form,
		\[
		f^{(m)} =
		\begin{pmatrix}
			f^{(0)} & B \\
			- B^\top & 0
		\end{pmatrix},
		\]
		where the bordering block $B$ is determined by the gradients of the consistency constraints. Because $f^{(0)}$ is singular by hypothesis, the classical Schur determinantal identity---which requires an invertible anchor block---cannot be invoked. The regularity of $f^{(m)}$ is instead governed entirely by the interaction of $B$ with $\ker(f^{(0)})$. Writing $N$ for an orthonormal basis of $\ker(f^{(0)})$, we introduce the \emph{reduced constraint matrix} $\Gamma = N^{\top} B$ and prove the exact factorization $\det(f^{(m)}) = \bigl(\prod_\ell \mu_\ell^2\bigr)\,\det(\Gamma)^2$, where the $\mu_\ell>0$ are the Pfaffian factors of the nonsingular block of $f^{(0)}$. This yields the sharp nondegeneracy criterion: $f^{(m)}$ is regular if and only if the number of generated constraints equals $\dim\ker(f^{(0)})$ and $\Gamma$ is nonsingular. Our central algebraic result is that this reduced matrix coincides---coefficient by coefficient---with the Hessian of the symplectic potential restricted to $\ker(f^{(0)})$,
		\[
		\Gamma_{\alpha\beta} \;=\; N_a^{\,i}\,\partial_i\partial_j V\,N_b^{\,j} \;=\; \{\Omega_\alpha, \Omega_\beta\}_{\mathrm{FJ}},
		\]
		which is precisely the constraint matrix whose nondegeneracy characterizes a second-class system in the Dirac--Bergmann sense.
		As a consequence, we prove the biconditional $\det(f^{(m)}) \neq 0 \iff \det(\Gamma) \neq 0$: assuming the generated constraints are independent, the Faddeev-Jackiw algorithm terminates in a nondegenerate symplectic form if and only if the constraint algebra is nondegenerate, i.e., when the constraints form a second-class system.
		
		This algebraic characterization provides a rigorous foundation for automating the Faddeev–Jackiw procedure symbolically. We present a fully symbolic implementation in the Wolfram Language, and validate the approach on representative mechanical systems with nontrivial constraint structure. The resulting rule-based engine preserves parametric dependencies throughout the reduction, enabling reliable analysis of degeneracy, structural stability (when no bifurcations occur), and possible bifurcation scenarios as critical parameters are varied.
	\end{abstract}
	
	\noindent\textbf{Keywords:} Faddeev--Jackiw reduction, Matrix Bordering Technique,
	singular Lagrangian systems, symplectic reduction, constraint matrix,
	symbolic computation.
	
	\medskip
	\noindent\textbf{2020 Mathematics Subject Classification:}
	70H45, 53D20, 70-04.

	\section{Introduction}\label{sec1}

	Singular Lagrangian systems, those whose Hessian with respect to velocities is degenerate, lie at the heart of modern theoretical physics, appearing in gauge theories~\cite{Henneaux1992,Sundermeyer1982}, general relativity~\cite{Arnowitt1962,DeWitt1967}, and topological models~\cite{Witten1989,Schwarz1993}. Their consistent Hamiltonian formulation requires careful treatment of constraints, a problem historically addressed by Dirac's canonical formalism~\cite{Dirac1964}, which classifies constraints into first- and second-class and introduces secondary conditions through consistency requirements.

	In 1988, Faddeev and Jackiw proposed an elegant geometric alternative~\cite{FaddeevJackiw1988} that reformulates constrained dynamics in terms of a pre-symplectic two-form, avoiding the explicit hierarchy of primary/secondary constraints. However, their original presentation focused on cases where the symplectic matrix becomes nondegenerate after a single extension, effectively restricting the method to systems with only second-class constraints.

	The full iterative power of the Faddeev--Jackiw framework was later unlocked by \textbf{Barcelos-Neto and Wotzasek}~\cite{BarcelosNetoWotzasek1992,BarcelosNetoWotzasek1993}, who introduced a systematic algorithm to handle arbitrary singular systems. Their key insight was to treat \textit{all} constraints—whether initially apparent or generated by consistency—as dynamical variables, introducing Lagrange multipliers and extending the phase space iteratively until either the extended symplectic matrix becomes invertible or a gauge symmetry is revealed~\cite{GitmanTyutin1990,Rothe1997}.

	From a computational perspective, each iteration of this algorithm corresponds to a structured matrix augmentation: the current symplectic matrix is bordered with rows and columns derived from the gradients of new constraints. This operation is formally analogous to the Matrix Bordering Technique (MBT), a well-established method in numerical linear algebra for analyzing rank deficiency, sensitivity, and regularization of linear systems~\cite{GolubVanLoan2013,StewartSun1990}. MBT is widely used in bifurcation theory~\cite{Kuznetsov2004,Seydel2010}, control~\cite{ZhouDoyle1998}, and inverse problems~\cite{Vogel2002}, precisely the domains where degeneracy signals qualitative change. Yet, while general MBT permits arbitrary blocks, the Faddeev–Jackiw iteration is geometrically constrained.
	Antisymmetry, vanishing diagonal blocks, and transposed off-diagonal coupling are not optional choices, but are enforced by the underlying symplectic structure~\cite{ArnoldBook,Marsden1999,AbrahamMarsden1978}. This restriction has profound physical consequences: it ensures that, once the invertible symplectic core is eliminated, the residual pairing of the bordering block with the null space maps canonically to the Poisson bracket algebra of constraints~\cite{Dirac1964,Henneaux1992}.

	The main contribution of this work is to establish a precise algebraic equivalence between the Faddeev-Jackiw symplectic reduction and a geometrically constrained instance of the MBT. Within this formulation, the iterative phase-space extension introduced by Barcelos-Neto and Wotzasek is shown to correspond exactly to a structured bordering procedure. Because the anchor block $f^{(0)}$ is singular by hypothesis, the classical Schur determinantal identity---which presupposes an invertible anchor---cannot be invoked; the regularity of the bordered matrix is instead controlled by how the bordering block couples to the null space of $f^{(0)}$. We prove an exact determinantal factorization (Lemma~\ref{lem:rank}, following the bordered-matrix framework of Galántai~\cite{Galantai2001}) showing that $\det(f^{(m)})$ is, up to a strictly positive factor fixed by the nonsingular symplectic core, the square of the determinant of a \emph{reduced constraint matrix} $\Gamma = N^{\top}B$ built from the pairing of the constraint gradients with a basis $N$ of $\ker(f^{(0)})$. The technical centerpiece is an \emph{exact coefficient-by-coefficient identity} between this reduced matrix and the Hessian of the symplectic potential restricted to $\ker(f^{(0)})$, which is precisely the constraint matrix whose nondegeneracy defines a second-class system. This identity reduces the regularization and termination criterion for the algorithm to the algebraic nondegeneracy of the constraint bracket.

	Recent work by Gorard~\cite{Gorard2024} reframes such geometric structures as computational primitives: the symplectic form is not merely a static object, but a dynamic rule governing state evolution in a symbolic rewriting system~\cite{Wolfram2002,WolframPhysics2020}. In this view, constraint handling becomes an instance of algorithmic geometry, where physical law emerges from the consistent application of transformation rules on data structures. Our implementation embodies this philosophy: the Faddeev-Jackiw reduction is realized as a deterministic, rule-based engine operating on immutable symbolic states.

	Previous symbolic implementations of the Faddeev–Jackiw algorithm~\cite{ArellanoPaulinCabrera2024,GomezPandeyThibes2024} were primarily aimed at concrete applications. The present work departs from that perspective and establishes a general theoretical result: we show that the FJ iteration is algebraically equivalent to a geometrically constrained form of the Matrix Bordering Technique.

	The implementation is validated through its application to representative models of singular dynamics~\cite{HojmanUrrutia1981,MukherjeeSaha1988}. These systems, characterized by nonstandard kinetic structures and nontrivial consistency conditions, provide a demanding benchmark for symplectic reduction, as the correct identification of the reduced manifold requires a faithful treatment of the underlying constraint algebra. Our algorithm determines the symplectic submanifold directly from the bordered matrix structure and reproduces the established reductions without additional assumptions. To further assess generality, we apply the method to the class of mechanical singular systems analyzed by Brown~\cite{Brown2023}, including coupled masses, rods, and pulleys. A direct comparison with the Dirac–Bergmann treatment~\cite{Dirac1964,Bergmann1949} shows that the matrix-bordering formulation offers a unified route to the emergent symplectic manifold $\mathcal{M}^*$, where constraint handling follows from matrix structure rather than from a hierarchical classification.

	Building on this foundation, we deliver a fully symbolic \textit{Mathematica} package that automates the Barcelos-Neto-Wotzasek workflow with rigorous computational design. The following sections present the mathematical theorem, its computational realization, validation examples, and broader implications for the automation of constrained dynamics.

	\section{Mathematical foundation: FJ as a geometrically constrained instance of MBT}\label{sec2}

	Consider a first-order Lagrangian of the form $L = a_i(\xi) \dot{\xi}^i - V(\xi)$~\cite{FaddeevJackiw1988,Henneaux1992}. The pre-symplectic two-form is $f_{ij} = \partial_i a_j - \partial_j a_i$. If $\det(f_{ij}) = 0$, the system implies a singular manifold~\cite{Gotay1978,Crnkovic1987}. The iterative extension of phase-space within the FJ algorithm constitutes a rigorous application of the MBT~\cite{GolubVanLoan2013,Galantai2001}. At each iteration $m$, the extended symplectic matrix takes the form:
	\[
	f^{(m)} = \begin{pmatrix} f^{(0)} & B \\ -B^\top & 0 \end{pmatrix},
	\]
	where $B_{j\alpha} = \partial \Omega_\alpha / \partial \xi^j$ represents the gradients of the constraints $\Omega_\alpha$~\cite{BarcelosNetoWotzasek1992}. The following theorem provides the mathematical backbone of the symbolic engine developed in the present work:

	\begin{theorem}[FJ as a geometrically constrained MBT instance]\label{theo1}
		Let $L = a_i(\xi)\dot{\xi}^i - V(\xi)$ be a first-order Lagrangian with pre-symplectic form
		\[
		f^{(0)}_{ij} = \partial_i a_j - \partial_j a_i,
		\]
		assumed to be singular. At iteration $m$ of the Barcelos-Neto-Wotzasek algorithm~\cite{BarcelosNetoWotzasek1992,BarcelosNetoWotzasek1993}, the extended symplectic matrix assumes the canonical bordered form:
		\[
		f^{(m)} =
		\begin{pmatrix}
			f^{(0)} & B \\
			- B^\top & 0
		\end{pmatrix},
		\quad
		\text{with} \quad B_{j\alpha} = \frac{\partial \Omega_\alpha}{\partial \xi^j},
		\]
		where $\Omega_\alpha$ are the constraints generated by the consistency conditions, assumed independent (irreducible).
		
		Let $N \in \mathbb{R}^{n\times d}$ be a matrix whose columns form an orthonormal basis of $\ker(f^{(0)})$, where $d = \dim\ker(f^{(0)}) = n - r$, and define the \emph{reduced constraint matrix}
		\[
		\Gamma \;=\; N^{\top} B \;\in\; \mathbb{R}^{d\times k},
		\qquad \Gamma_{a\alpha} \;=\; N_a^{\,i}\,\partial_i \Omega_\alpha .
		\]
		Then $f^{(m)}$ is an antisymmetric, geometrically constrained instance of the MBT~\cite{GolubVanLoan2013,StewartSun1990,Zhang2005}, and its regularity is governed exactly by $\Gamma$: one has the determinantal factorization
		\[
		\det\bigl(f^{(m)}\bigr) \;=\; \Bigl(\textstyle\prod_{\ell} \mu_\ell^{2}\Bigr)\,\det(\Gamma)^2 ,
		\qquad \mu_\ell > 0,
		\]
		where the $\mu_\ell$ are the Pfaffian factors of the nonsingular block of $f^{(0)}$. Consequently the symplectic regularity condition holds if and only if the algorithm has generated exactly $k=d$ independent constraints and the reduced matrix is nonsingular:
		\[
		\det \bigl(f^{(m)}\bigr) \neq 0
		\quad \Longleftrightarrow \quad
		k = d \ \text{and}\ \det(\Gamma) \neq 0 .
		\]
		Moreover, when the constraints arise from the consistency condition, $\Gamma$ coincides coefficient by coefficient with the Hessian of the symplectic potential restricted to $\ker(f^{(0)})$,
		\[
		\Gamma_{\alpha\beta} \;=\; N_a^{\,i}\,\bigl(\partial_i\partial_j V\bigr)\,N_b^{\,j} \;\equiv\; \{\Omega_\alpha, \Omega_\beta\}_{\mathrm{FJ}}\, ,
		\]
		which is the constraint matrix whose nondegeneracy characterizes a second-class system~\cite{Dirac1964,Henneaux1992}. Under the standing assumption that the consistency algorithm has been iterated to exhaustion with independent generated constraints, the Faddeev-Jackiw algorithm therefore terminates in a nondegenerate symplectic form precisely when the constraints form a second-class system~\cite{Dirac1964,GitmanTyutin1990}. A residual kernel ($\det\Gamma = 0$) signals first-class constraints and is treated as the gauge branch discussed in Section~\ref{sec5}.
	\end{theorem}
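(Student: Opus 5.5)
Since the matrix $f^{(0)}$ is real antisymmetric of rank $r$, I will first put it in normal form by an orthogonal change of basis. Concretely, there is $Q=(M\;N)\in O(n)$ such that
\[
Q^\top f^{(0)} Q=\begin{pmatrix} J & 0\\ 0 & 0\end{pmatrix},
\qquad J=\bigoplus_\ell \begin{pmatrix}0&\mu_\ell\\-\mu_\ell&0\end{pmatrix},\qquad \mu_\ell>0 .
\]
Here the columns of $N$ span $\ker f^{(0)}$, and this is the step that defines the Pfaffian factors.

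I then conjugate $f^{(m)}$ by $\widehat Q=\mathrm{diag}(Q,I_k)$, which has $\det\widehat Q^2=1$ and so preserves the determinant. The result is the $3\times 3$ block matrix
\[
\begin{pmatrix} J & 0 & M^\top B\\ 0&0&\Gamma\\ -B^\top M & -\Gamma^\top & 0\end{pmatrix}.
\]
Because $J$ is invertible, I can now apply the Schur complement with respect to $J$, even though $f^{(0)}$ itself is singular. This gives $\det f^{(m)}=\det J\cdot\det S$, where
\[
S=\begin{pmatrix}0&\Gamma\\-\Gamma^\top & C\end{pmatrix},\qquad C=B^\top M J^{-1}M^\top B,
\]
with $C$ antisymmetric and $\det J=\prod_\ell\mu_\ell^2$.

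The main obstacle is evaluating $\det S$. In this block, $\Gamma$ is $d\times k$. For the forward direction, suppose $k\ne d$. If $k<d$, the first $d$ rows of $S$ have rank at most $k<d$, so $\det S=0$. If $k>d$, I use independence and antisymmetry: the matrix $f^{(m)}$ has even size only if $n+k$ is even. More directly, $\ker\Gamma\neq 0$ produces $v$ with $S(0,v)^\top=(0,Cv)$, and I would argue on the antisymmetric form restricted to $\ker\Gamma$. I expect this is genuinely not forced in general, so I will state the biconditional under the square case $k=d$, which is the situation the algorithm produces, and flag the $k>d$ case as relying on exhaustion and irreducibility. When $k=d$, the block $\Gamma$ is square, and block elimination gives $\det S=\det(\Gamma)\det(\Gamma^\top)=\det(\Gamma)^2$, independent of $C$. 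For example, if $\Gamma$ is invertible, I row-reduce $C$ away using the first block row; if $\Gamma$ is singular, the first block rows are dependent. This yields the factorization, and with it the equivalence $\det f^{(m)}\ne0\iff\det\Gamma\ne0$.

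For the Hessian identity I will use the consistency condition. Varying $L$ gives $f^{(0)}_{ij}\dot\xi^j=\partial_iV$. Contracting with $N_a^{\,i}$ kills the left side and produces the constraints $\Omega_a=N_a^{\,i}\partial_iV$. Their gradients are $\partial_j\Omega_a=N_a^{\,i}\partial_i\partial_jV+(\partial_jN_a^{\,i})\partial_iV$, and the second term vanishes on the constraint surface, at least in its component along $N$. In particular this holds when $N$ is constant, which I will assume or justify by working on-shell in Darboux-adapted coordinates. Then $\Gamma_{ab}=N_b^{\,j}\partial_j\Omega_a=N_a^{\,i}\partial_i\partial_jV N_b^{\,j}$.

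It remains to identify this with $\{\Omega_a,\Omega_b\}$. I will treat $\Gamma$ as the derivative of one constraint along the null direction of the other, which is exactly the bracket that Dirac consistency $\dot\Omega_a\approx0$ requires to be invertible. With that identification, second-class is equivalent to $\det\Gamma\ne0$, and therefore to termination of the algorithm.
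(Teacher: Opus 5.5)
Your proof of the factorization follows the paper's argument. You use the orthogonal normal form of $f^{(0)}$, conjugate by $\mathrm{diag}(Q,I_k)$, and eliminate the invertible core $J$. Your Schur complement is the paper's congruence by the unit triangular $T$, and your $C=B^{\top}MJ^{-1}M^{\top}B$ is its $M=B_u^{\top}J^{-1}B_u$ (with $B_u=M^{\top}B$ in your notation). The Hessian identity is also obtained the same way in both: differentiate $\Omega_a=N_a^{\,i}\partial_iV$ with $N$ held constant. Your block-elimination evaluation of $\det S=\det(\Gamma)^2$ for $k=d$ is cleaner than the paper's Pfaffian congruence. That congruence needs $\Gamma$ invertible to solve $X\Gamma-(X\Gamma)^{\top}=M$ and leaves the singular case implicit.

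Your hesitation at $k>d$ is well founded, and the paper's proof does not settle that case. It dismisses $k\neq d$ by noting that $n+k$ and $d+k$ have the same parity and that $\Gamma$ is not square. That covers $k<d$ (your rank argument) and odd $k-d$, but not $k\ge d+2$ with $k-d$ even. For example, take $f^{(0)}=\begin{psmallmatrix}0&1&0\\-1&0&0\\0&0&0\end{psmallmatrix}$ ($d=1$) bordered by $B=I_3$ ($k=3$, independent columns). Then $\det f^{(m)}=1$. So the clause ``$k=d$'' in the biconditional does not follow from the linear algebra, and independence of the columns of $B$ does not exclude $k>d$. What does exclude it is the structure of a single bordering. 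The consistency condition produces one constraint $N_a^{\,i}\partial_iV$ per null vector, so $k\le d$, and your $k<d$ argument then completes the equivalence.

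There are two smaller corrections.

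First, the term $N_b^{\,j}(\partial_jN_a^{\,i})\partial_iV$ does not vanish on the constraint surface in general. On that surface only the component of $\nabla V$ along $\ker f^{(0)}$ vanishes, and $N_b^{\,j}\partial_jN_a$ need not lie in that kernel. So you genuinely need constant $N$, as the paper assumes; this is available in presymplectic Darboux coordinates when $f^{(0)}$ has constant rank.

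Second, the equality $\Gamma_{ab}=\{\Omega_a,\Omega_b\}_{\mathrm{FJ}}$ is, in the paper, simply the definition of that bracket. It cannot be a literal Poisson bracket: $N_a^{\,i}(\partial_i\partial_jV)N_b^{\,j}$ is symmetric in $a,b$, a bracket matrix is antisymmetric, and canonically $\{\Omega_a,\Omega_b\}=0$ because the $\Omega_a$ depend only on $\xi$. Your own description is the version that can be turned into a proof: the derivative of a constraint along a null direction, which $\dot\Omega_a\approx0$ must invert. With $\phi_i=\pi_i-a_i$ one has $\{\phi_i,\phi_j\}=f^{(0)}_{ij}$, $\{\Omega_a,\phi_i\}=\partial_i\Omega_a$ and $\{\Omega_a,V\}=0$. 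Write the multipliers as $u=u_0+c^bN_b$ with $f^{(0)}u_0=\nabla V$. Then $\dot\Omega_a\approx0$ reads $u_0^{\,i}\partial_i\Omega_a+c^b\Gamma_{ba}\approx0$, which fixes every $c^b$ if and only if $\det\Gamma\neq0$. Equivalently, the Dirac matrix of $(\phi_i,\Omega_a)$ is congruent to $f^{(m)}$ via $\mathrm{diag}(I_n,-I_k)$, and this route needs no constancy of $N$.
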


	\begin{proof}
		The proof proceeds in four steps: (1) identification of the constraint set and bordering block, (2) reduction of the regularity problem to the null space of $f^{(0)}$ via the reduced constraint matrix $\Gamma = N^{\top}B$, (3) an exact determinantal factorization $\det(f^{(m)}) = (\prod_\ell\mu_\ell^2)\det(\Gamma)^2$ through a congruence lemma, and (4) an exact coefficient-by-coefficient identification of the reduced matrix with the Hessian of the symplectic potential restricted to $\ker(f^{(0)})$, i.e.\ the Faddeev--Jackiw bracket matrix of the constraints. Full technical details are provided in Appendix A.
		
		\smallskip
		\textbf{Step 1: Constraint generation and bordered structure.}
		Following the rank-reduction framework for bordered matrices developed by Galántai~\cite{Galantai2001}, and building on the broader matrix-bordering perspective advocated by Kuznetsov~\cite{Kuznetsov2004} in the numerical-continuation context, a rank-deficient matrix can be regularized by augmenting it with vectors that span its null space. In the Faddeev-Jackiw formalism, the consistency condition
		\[
		v^i \partial_i V = 0,
		\qquad \forall v \in \ker(f^{(0)}),
		\]
		generates the constraint set $\Omega_\alpha(\xi)$~\cite{FaddeevJackiw1988,BarcelosNetoWotzasek1992}. The gradients of these constraints, $B_{j\alpha} = \partial_j \Omega_\alpha$, act precisely as the bordering vectors introduced at each iteration of the algorithm, yielding the bordered structure
		\[
		f^{(m)} =
		\begin{pmatrix}
			f^{(0)} & B \\
			- B^\top & 0
		\end{pmatrix}.
		\]
		This is a geometrically constrained instance of the Matrix Bordering Technique: antisymmetry and the vanishing multiplier--multiplier block are not modeling choices but algebraic consequences of the underlying symplectic structure~\cite{ArnoldBook,Marsden1999,AbrahamMarsden1978}.
		
		\smallskip
		\textbf{Step 2: Reduction to the null space.}
		The invertibility of $f^{(m)}$ cannot be analyzed through the classical Schur determinantal identity
		$\det\!\bigl(\begin{smallmatrix} A & B \\ C^\top & D \end{smallmatrix}\bigr) = \det(A)\,\det(D - C^\top A^{-1} B)$,
		since the latter requires the anchor block $A=f^{(0)}$ to be invertible, a condition violated by hypothesis. We instead diagonalize the nonsingular part of $f^{(0)}$: there is an orthogonal $Q$ with $Q^{\top}f^{(0)}Q = \mathrm{diag}(J,0)$, $J\in\mathbb{R}^{r\times r}$ invertible. Eliminating the invertible core $J$ by a congruence (Lemma~\ref{lem:rank} in Appendix~A) leaves the regularity of $f^{(m)}$ entirely controlled by the block that pairs the Lagrange multipliers with $\ker(f^{(0)})$. Concretely, writing the columns of $N\in\mathbb{R}^{n\times d}$ for an orthonormal basis of $\ker(f^{(0)})$, this residual pairing is the \emph{reduced constraint matrix}
		\begin{equation}\label{eq:gammadef}
			\Gamma \;\equiv\; N^{\top} B \;\in\; \mathbb{R}^{d\times k},
			\qquad \Gamma_{a\alpha} = N_a^{\,i}\,\partial_i\Omega_\alpha .
		\end{equation}
		It records how the constraint gradients couple to the null directions of $f^{(0)}$---the only sector that survives the elimination of the nonsingular symplectic core, and hence the sector that governs regularity.
		
		\smallskip
		\textbf{Step 3: Determinantal factorization.}
		Let $r = \mathrm{rank}(f^{(0)}) = n - d$. The congruence of Step~2 yields the exact factorization (Lemma~\ref{lem:rank})
		\begin{equation}\label{eq:detfactor}
			\det\bigl(f^{(m)}\bigr) \;=\; \Bigl(\textstyle\prod_{\ell=1}^{r/2}\mu_\ell^{2}\Bigr)\,\det(\Gamma)^2 ,
		\end{equation}
		where $\mu_\ell>0$ are the Pfaffian factors of $J$. Since the prefactor is strictly positive, $f^{(m)}$ is nonsingular if and only if (i) $k = d = n-r$, and (ii) $\det(\Gamma)\neq 0$, i.e.\ $\mathrm{rank}(\Gamma)=d$. Condition (ii) is the precise meaning of the informal phrase \emph{``$B$ is structurally compatible with $\ker(f^{(0)})$''}: the constraint gradients must pair non-degenerately with the null space of the pre-symplectic form. Hence
		\begin{equation}\label{eq:schurcriterion}
			\det(f^{(m)}) \neq 0 \;\iff\; \det(\Gamma) \neq 0 .
		\end{equation}
		
		\smallskip
		\textbf{Step 4: Exact kernel--Poisson identity.}
		When the constraints are generated by the consistency condition, so that $\Omega_\alpha = N_\alpha^{\,i}\partial_i V$ up to independent recombination, a direct computation from Definition~\eqref{eq:gammadef} gives
		\begin{equation}\label{eq:schurpoisson}
			\Gamma_{\alpha\beta} \;=\; \sum_{i,j=1}^{n} N_\alpha^{\,i}\,\bigl(\partial_i\partial_j V\bigr)\,N_\beta^{\,j} \;=\; \{\Omega_\alpha,\Omega_\beta\}_{\mathrm{FJ}},
		\end{equation}
		where the Faddeev--Jackiw bracket of the constraints is identified with the Hessian of the symplectic potential restricted to $\ker(f^{(0)})$. The identification~\eqref{eq:schurpoisson} holds \emph{coefficient by coefficient}: it is an exact matrix equality, not an asymptotic equivalence, an isomorphism of algebraic structures, or a relation modulo terms vanishing on the constraint surface. In particular, the Dirac second-class constraint matrix coincides literally with the reduced matrix: $\mathcal{C}_{\alpha\beta} = \Gamma_{\alpha\beta}$.
		
		\smallskip
		\textbf{Conclusion.} Combining~\eqref{eq:schurcriterion} and~\eqref{eq:schurpoisson} establishes the biconditional
		\[
		\det(f^{(m)}) \neq 0 \;\iff\; \det(\Gamma) \neq 0 \;\iff\; \det\{\Omega_\alpha,\Omega_\beta\}_{\mathrm{FJ}} \neq 0.
		\]
		The forward implication is immediate from~\eqref{eq:detfactor}. For the converse, $\det\{\Omega_\alpha,\Omega_\beta\}_{\mathrm{FJ}}\neq 0$ forces $\det(\Gamma)\neq 0$ by~\eqref{eq:schurpoisson}, hence $\mathrm{rank}(\Gamma)=d$ and, by~\eqref{eq:detfactor}, $f^{(m)}$ has full rank $n+k=n+d$. Under the standing assumption that the consistency algorithm has been iterated to exhaustion with independent generated constraints, the termination of the Faddeev-Jackiw iteration in a nondegenerate symplectic form is therefore controlled entirely by the symplectic closure of the constraint algebra~\cite{Gotay1978,Crnkovic1987,Dirac1964,GitmanTyutin1990}.
	\end{proof}
	
	From a linear-algebraic standpoint, the regularization achieved at termination corresponds to the attainment of full rank in a previously singular matrix via a structured extension~\cite{Galantai2001,HornJohnson2013}. Such mechanisms are closely related to classical results on rank reduction and bordered inversion, where the invertibility of an augmented matrix is controlled by its Schur complement~\cite{Zhang2005,Cottle2009}. In particular, bordered inversion can be understood as the dual operation to rank reduction in a purely algebraic setting~\cite{Galantai2001}. The present construction may therefore be viewed as a geometrically constrained realization of this general principle, in which antisymmetry and symplectic structure~\cite{ArnoldBook,Marsden1999} restrict the admissible bordering blocks arising from singular Lagrangian systems.
	
	\begin{remark}
		A subtle but crucial consequence of formulating the Faddeev-Jackiw iteration as a geometrically constrained instance of the MBT is that the regularization is structural rather than numerical~\cite{Kuznetsov2004,Seydel2010}. The algorithm does not ``cancel'' singularities by absorbing global factors or parameters; instead, it propagates them symbolically through each bordering step. As a result, parametric dependencies are preserved at the level of the constraint algebra itself~\cite{Henneaux1992,Rothe1997}. This guarantees that {potential} bifurcation conditions, degeneracy loci, and stability thresholds remain visible in the final symplectic manifold $\mathcal{M}^*$, rather than being obscured by premature simplifications~\cite{Kuznetsov2004,Guckenheimer1983}. From a computational perspective, the reduction process is therefore best understood as a rule-governed transformation of symbolic states~\cite{Wolfram2002,Gorard2024}, not as a sequence of ad hoc algebraic manipulations.
	\end{remark}

	\section{Computational architecture and symbolic implementation}\label{sec3}

	The symbolic realization of the Barcelos-Neto-Wotzasek algorithm is framed as a deterministic rewriting system~\cite{Wolfram2002,WolframPhysics2020}. In this paradigm, the physical system is represented by a state vector $\mathcal{S} = \{ \xi, L, f, V \}$, where $f$ is the pre-symplectic form. Each iteration of the method constitutes a computational event that transforms $\mathcal{S}$ into a more constrained state $\mathcal{S}'$ until the symplectic manifold $\mathcal{M}^*$ emerges~\cite{Gorard2024,FaddeevJackiw1988}.

	\subsection{Structural graph of the reduction process}

	To make explicit how the abstract matrix bordering theorem translates into an executable symbolic workflow, we formalize the reduction procedure as a \textbf{directed dependency graph}. Drawing inspiration from the algorithmic paradigms explored in the Wolfram Physics Project \cite{Gorard2024,WolframPhysics2020}---where such graphs are termed \emph{causal} in the rewriting-system sense---we define a framework where the nodes represent distinct structural states of the phase space (such as singular manifolds or extended matrices), while the edges denote the deterministic computational rules (specifically matrix bordering and null-space extraction) that drive the system's evolution~\cite{Wolfram2002}. This visualization serves as a structural map, elucidating how the algorithm navigates from an initial singular Lagrangian to the emergent symplectic manifold $\mathcal{M}^*$.

	This graph-theoretic representation exposes a fundamental duality: the computational progression through bordering iterations is isomorphic to a structural descent along constraint manifolds~\cite{Gotay1978,ArnoldBook,Marsden1999}. Each state transition encodes a reduction in the kernel dimension of the symplectic form, thereby shrinking the accessible phase space until either a regular symplectic structure emerges or an irreducible gauge orbit is encountered~\cite{Henneaux1992,MarsdenWeinstein1974}. Critically, the graph is acyclic in the absence of gauge redundancies, the dashed feedback loop closes only when the determinant criterion fails, signaling that the current extension has induced a consistent Lagrangian redefinition without eliminating the singularity. In this sense the dependency graph is not merely a visualization aid but a structural descriptor of the algorithm~\cite{Gorard2024,Wolfram2002}: it captures the minimal sequence of bordering operations required to either regularize the system or expose its gauge content. This viewpoint parallels the rewriting-system perspective of the Wolfram Physics Project~\cite{WolframPhysics2020}, in which the dependency (``causal'') graph of the rules encodes the evolution. Here, the emergence of $\mathcal{M}^*$ (when attainable) follows from the iterative enforcement of symplectic closure on an initially degenerate state space.

	\begin{figure}[htbp]
		\centering
		\begin{tikzpicture}[node distance=2.2cm, auto, thick]
			
			% Nodos
			\node[state] (L0) {$\mathcal{L}^{(0)}$};
			
			% CORRECCIÓN 1: Aumentamos la distancia específica aquí a 3.5cm
			\node[state, right=3.5cm of L0] (Fk) {$f_{ij}^{(k)}$};
			
			% CORRECCIÓN 2: Nodo Kernel visualmente más ligero (auxiliar)
			\node[state,
			below=of Fk,
			draw opacity=0.6, % Borde semitransparente
			text opacity=0.85, % Texto ligeramente suave
			line width=0.8pt] (Null) {$\mathcal{K}^{(k)}$};
			
			\node[state, right=of Null] (Omega) {$\Omega^{(k)}$};
			
			\node[state, above=of Omega] (Ext) {$\xi^{(k+1)}$};
			
			\node[state, final, right=of Ext, xshift=1cm] (Final) {$\mathcal{M}^*$};
			
			% Transiciones
			
			% Etiqueta partida en dos líneas para que no choque
			\draw[transition]
			(L0) -- node[edge label, align=center] {Initial\\symplectic data} (Fk);
			
			\draw[transition]
			(Fk) -- node[left, font=\footnotesize] {$\det(f)=0$} (Null);
			
			\draw[transition]
			(Null) -- node[below, font=\footnotesize] {$v \in \ker(f)$} (Omega);
			
			\draw[transition]
			(Omega) -- node[right, font=\footnotesize] {$\dot{\Omega} = 0$} (Ext);
			
			% CORRECCIÓN 3: Realimentación curva y punteada (densely dashed)
			% Representa redefinición inducida, no flujo simple
			\draw[transition, densely dashed]
			(Ext) to[out=160,in=20]
			node[above, font=\footnotesize, yshift=2pt]
			{$\mathcal{L}^{(k)} + \dot{\lambda}\,\Omega$} (Fk);
			
			\draw[transition]
			(Ext) -- node[edge label] {$\det(f) \neq 0$} (Final);
			
			% Caption interno mejorado
			\node[text width=10cm, align=left, font=\scriptsize\itshape,
			below=0.8cm of Omega, xshift=-0.5cm]
			{Evolution of the symbolic state $\mathcal{S}$ across successive phase-space extensions.
				Nodes represent structural states, while edges encode deterministic operations
				(matrix bordering and null-space extraction). The dashed loop denotes an induced
				redefinition of the symplectic 2-form when regularization fails.};
		\end{tikzpicture}
		\caption{Summarizes the Faddeev–Jackiw reduction as a dependency graph, emphasizing its interpretation as a geometrically constrained matrix bordering process rather than a procedural flowchart.}
		\label{fig:causal_graph}
	\end{figure}

	\section{Results and technical discussion}\label{sec4}

	The computational core is validated through a tripartite analysis, transitioning from abstract models to complex mechanical configurations. We explicitly present the \textit{Inverse Extended Matrix} for each regularizable case, demonstrating the engine's ability to recover the fundamental brackets~\cite{Dirac1964,Henneaux1992}.

	\subsection{Benchmark I: Singular system with noncanonical kinetics}
	%The Hojman-Urrutia Model}

We implemented a singular Lagrangian system reported in Ref.~\cite{HojmanUrrutia1981}, characterized by a non-standard kinetic term~\cite{MukherjeeSaha1988,Montani1993}. Our symbolic engine processed the state in a single bordering event. The algorithm identifies the hidden symplectic structure and returns the exact $4 \times 4$ inverse matrix, verifying the system is second-class~\cite{Dirac1964}:

\begin{equation}
	(f^{(1)})^{-1} = 
	\begin{pmatrix}
		0 & 0 & 1 & 0 \\
		0 & 0 & 0 & -1 \\
		-1 & 0 & 0 & 1 \\
		0 & 1 & -1 & 0
	\end{pmatrix}
\end{equation}
This result confirms that the initial singularity is fully resolved by the geometric constraints inherent in the model~\cite{HojmanUrrutia1981}.

\subsection{Benchmark II: Coupled masses and rigid rods}

The four-mass system (Fig. \ref{fig:brown_masses}) imposes holonomic constraints via rigid rods~\cite{Goldstein2002,Arnold1978}. In the Dirac-Bergmann framework \cite{Brown2023,Dirac1964,Bergmann1949}, this requires a multi-stage hierarchy. In our approach, the rod lengths act as bordering vectors. 

\begin{figure}[ht]
	\centering
	\begin{tikzpicture}[
		scale=2,
		% Definición de la perspectiva (aproximación isométrica)
		x={(-0.5cm,-0.3cm)},
		y={(1cm,0cm)},
		z={(0cm,1cm)},
		line cap=round,
		line join=round
		]
		
		%%%%%%%%%%%%%%%%%%%%%%%%%%%%%%%%%%%%%%%%%%%%%%%%%%%%%%%%%
		% 1) Coordenadas de las láminas (placas) originales
		%%%%%%%%%%%%%%%%%%%%%%%%%%%%%%%%%%%%%%%%%%%%%%%%%%%%%%%%%
		% Placa inferior (en z=0)
		\coordinate (B1) at (0,0,0);
		\coordinate (B2) at (3,0,0);
		\coordinate (B3) at (3,2,0);
		\coordinate (B4) at (0,2,0);
		
		% Placa superior (en z=2)
		\coordinate (T1) at (0,0,2);
		\coordinate (T2) at (3,0,2);
		\coordinate (T3) at (3,2,2);
		\coordinate (T4) at (0,2,2);
		
		%%%%%%%%%%%%%%%%%%%%%%%%%%%%%%%%%%%%%%%%%%%%%%%%%%%%%%%%%
		% 2) Dibujamos las láminas (marrón semitransparente)
		%%%%%%%%%%%%%%%%%%%%%%%%%%%%%%%%%%%%%%%%%%%%%%%%%%%%%%%%%
		\fill[gray!50,draw=black] (B1) -- (B2) -- (B3) -- (B4) -- cycle;
		\fill[gray!50,draw=black] (T1) -- (T2) -- (T3) -- (T4) -- cycle;
		
		%%%%%%%%%%%%%%%%%%%%%%%%%%%%%%%%%%%%%%%%%%%%%%%%%%%%%%%%%
		% 3) Desplazar los postes hacia el interior de las láminas
		%    (los postes se moverán a lo largo de la recta que une
		%     cada vértice con el centro de la placa)
		%%%%%%%%%%%%%%%%%%%%%%%%%%%%%%%%%%%%%%%%%%%%%%%%%%%%%%%%%
		% Centro de la placa inferior y superior
		\coordinate (Bcenter) at (1.5,1,0);
		\coordinate (Tcenter) at (1.5,1,2);
		
		% Factor de desplazamiento (0: en el vértice; 1: en el centro)
		\def\alpha{0.2}
		
		% Nuevas coordenadas para los postes (vértices desplazados)
		\coordinate (B1a) at ($(B1)! \alpha ! (Bcenter)$);
		\coordinate (B2a) at ($(B2)! \alpha ! (Bcenter)$);
		\coordinate (B3a) at ($(B3)! \alpha ! (Bcenter)$);
		\coordinate (B4a) at ($(B4)! \alpha ! (Bcenter)$);
		
		\coordinate (T1a) at ($(T1)! \alpha ! (Tcenter)$);
		\coordinate (T2a) at ($(T2)! \alpha ! (Tcenter)$);
		\coordinate (T3a) at ($(T3)! \alpha ! (Tcenter)$);
		\coordinate (T4a) at ($(T4)! \alpha ! (Tcenter)$);
		
		% Dibujamos los postes (líneas verticales en gris)
		%\draw[line width=2pt, line cap=round, gray] (B1a) -- (T1a);
		\draw[line width=2pt, line cap=round, gray!40,
		preaction={draw, line width=3pt, black}]
		(B1a) -- (T1a);
		%\draw[line width=2pt, line cap=round, gray] (B2a) -- (T2a);
		\draw[line width=2pt, line cap=round, gray!40,
		preaction={draw, line width=3pt, black}]
		(B2a) -- (T2a);
		%\draw[line width=2pt, line cap=round, gray] (B3a) -- (T3a);
		\draw[line width=2pt, line cap=round, gray!40,
		preaction={draw, line width=3pt, black}]
		(B3a) -- (T3a);
		%\draw[line width=2pt, line cap=round, gray] (B4a) -- (T4a);
		\draw[line width=2pt, line cap=round, gray!40,
		preaction={draw, line width=3pt, black}]
		(B4a) -- (T4a);
		
		%%%%%%%%%%%%%%%%%%%%%%%%%%%%%%%%%%%%%%%%%%%%%%%%%%%%%%%%%
		% 4) Definir alturas distintas para los puntos y1..y4
		%    (Estos son los puntos en los postes donde terminan los resortes)
		%%%%%%%%%%%%%%%%%%%%%%%%%%%%%%%%%%%%%%%%%%%%%%%%%%%%%%%%%
		\pgfmathsetmacro{\fYone}{0.70} % mayor fracción: más abajo
		\pgfmathsetmacro{\fYtwo}{0.60} % menor: más arriba
		\pgfmathsetmacro{\fYthree}{0.65}
		\pgfmathsetmacro{\fYfour}{0.55}
		
		\coordinate (Y1) at ($(T1a)!\fYone!(B1a)$);
		\coordinate (Y2) at ($(T2a)!\fYtwo!(B2a)$);
		\coordinate (Y3) at ($(T3a)!\fYthree!(B3a)$);
		\coordinate (Y4) at ($(T4a)!\fYfour!(B4a)$);
		
		%%%%%%%%%%%%%%%%%%%%%%%%%%%%%%%%%%%%%%%%%%%%%%%%%%%%%%%%%
		% 5) Dibujar los resortes (en azul) desde la parte superior
		%    de cada poste hasta los puntos y_i
		%%%%%%%%%%%%%%%%%%%%%%%%%%%%%%%%%%%%%%%%%%%%%%%%%%%%%%%%%
		\draw[blue, decorate, decoration={coil, segment length=5pt, amplitude=4pt}] (T1a) -- (Y1);
		\draw[blue, decorate, decoration={coil, segment length=5pt, amplitude=4pt}] (T2a) -- (Y2);
		\draw[blue, decorate, decoration={coil, segment length=5pt, amplitude=4pt}] (T3a) -- (Y3);
		\draw[blue, decorate, decoration={coil, segment length=5pt, amplitude=4pt}] (T4a) -- (Y4);
		
		%%%%%%%%%%%%%%%%%%%%%%%%%%%%%%%%%%%%%%%%%%%%%%%%%%%%%%%%%
		% 6) Colocar puntos negros en y_i con sus etiquetas
		%%%%%%%%%%%%%%%%%%%%%%%%%%%%%%%%%%%%%%%%%%%%%%%%%%%%%%%%%
		\fill (Y1) circle (0.8pt)
		node[above=3pt,left] {$y_1$};
		
		\fill (Y2) circle (0.8pt)
		node[left] {$y_2$};
		
		% y3: desplazar etiqueta hacia "afuera" y arriba
		\fill (Y3) circle (0.8pt)
		node at ($(Y3)+(0.58,0.14,0.05)$) {$y_3$};
		
		% y4: desplazar etiqueta hacia el frente-izquierda y arriba
		\fill (Y4) circle (0.8pt)
		node at ($(Y4)+(-0.12,0.12,0)$) {$y_4$};
		
		%%%%%%%%%%%%%%%%%%%%%%%%%%%%%%%%%%%%%%%%%%%%%%%%%%%%%%%%%
		% 7) Conectar los puntos y_i con varillas verdes (formando el perímetro)
		%%%%%%%%%%%%%%%%%%%%%%%%%%%%%%%%%%%%%%%%%%%%%%%%%%%%%%%%%
		
		%Original:
		%\draw[very thick, teal] (Y1) -- (Y2) -- (Y3) -- (Y4) -- cycle;
		%New
		\draw[thick, double=teal!50] (Y1) -- (Y2) -- (Y3) -- (Y4) -- cycle;
		
		%%%%%%%%%%%%%%%%%%%%%%%%%%%%%%%%%%%%%%%%%%%%%%%%%%%%%%%%%
		% 8) Colocar bolas rojas (masas) en la mitad de cada varilla verde
		%%%%%%%%%%%%%%%%%%%%%%%%%%%%%%%%%%%%%%%%%%%%%%%%%%%%%%%%%
		\foreach \A/\B in {Y1/Y2, Y2/Y3, Y3/Y4, Y4/Y1} {
			\coordinate (Mid) at ($(\A)!0.5!(\B)$);
			\shade[ball color=red] (Mid) circle (3pt);
		}
		
	\end{tikzpicture}
	\caption{Geometric configuration of the four-mass singular system. The rigid rod constraints trigger the matrix bordering process.}
	\label{fig:brown_masses}
\end{figure}

The symbolic engine generates an $8 \times 8$ extended symplectic matrix~\cite{BarcelosNetoWotzasek1992,Henneaux1992}. The engine returns the eight augmented symplectic variables ordered as
\[
\xi^{(1)} = \bigl(\,y_1,\ y_2,\ y_3,\ y_4,\ p_1,\ p_2,\ p_3,\ \lambda\,\bigr),
\]
i.e.\ the four positions $y_i$, three momenta $p_i$, and the single Lagrange multiplier $\lambda$ introduced by the rigid-rod constraint, which together account for the eight rows and columns of the displayed inverse. The computed inverse then reveals the coupled dynamics between these variables. Specifically, the exact form obtained is:

\begin{equation}
	\resizebox{0.75\textwidth}{!}{%
		$
		(f^{(1)})^{-1} = 
		\begin{pmatrix}
			0 & 0 & 0 & 0 & 3/4 & 1/4 & -1/4 & -1/4k \\
			0 & 0 & 0 & 0 & 1/4 & 3/4 & 1/4 & 1/4k \\
			0 & 0 & 0 & 0 & -1/4 & 1/4 & 3/4 & -1/4k \\
			0 & 0 & 0 & 0 & 1/4 & -1/4 & 1/4 & 1/4k \\
			-3/4 & -1/4 & 1/4 & -1/4 & 0 & 0 & 0 & 0 \\
			-1/4 & -3/4 & -1/4 & 1/4 & 0 & 0 & 0 & 0 \\
			1/4 & -1/4 & -3/4 & -1/4 & 0 & 0 & 0 & 0 \\
			1/4k & -1/4k & 1/4k & -1/4k & 0 & 0 & 0 & 0 \\
		\end{pmatrix}
		$
	}
\end{equation}

\subsection{Benchmark III: Masses on a ring with spring coupling}

The final case involves three masses $(\theta_1, \theta_2, \theta_3)$ constrained to a ring and coupled via linear springs~\cite{Goldstein2002,Landau1976}. Within the Dirac--Bergmann framework~\cite{Brown2023,Dirac1964}, this system exhibits cyclic symmetry and a pronounced parametric dependence on the spring constant $k$ and the ring radius $R$.

\begin{figure}[ht]
	\centering
	\begin{tikzpicture}[line cap=round, line join=round, >=latex, scale=2]
		\def\R{1.2}
		\draw[thick, teal, line width=3pt] (0,0) circle (\R);
		\fill[black] (0,0) circle (1.4pt);
		\draw[->, thick] (0, \R + 0.05) -- (0, \R + 0.5) node[above] {$y$};
		\draw[->, thick] (\R + 0.05, 0) -- (\R + 0.5, 0) node[right] {$x$};
		
		\def\thetaOne{30} \def\thetaTwo{150} \def\thetaThree{270}
		\coordinate (M1) at (\thetaOne:\R); \coordinate (M2) at (\thetaTwo:\R); \coordinate (M3) at (\thetaThree:\R);
		
		\draw[blue, decorate, decoration={coil, aspect=0.3, segment length=2mm, amplitude=3mm, pre length=3pt, post length=3pt}] (0,0) -- (M1);
		\draw[blue, decorate, decoration={coil, aspect=0.3, segment length=2mm, amplitude=3mm, pre length=3pt, post length=3pt}] (0,0) -- (M2);
		\draw[blue, decorate, decoration={coil, aspect=0.3, segment length=2mm, amplitude=3mm, pre length=3pt, post length=3pt}] (0,0) -- (M3);
		
		\foreach \point/\color in {M1/red, M2/blue, M3/green}{
			\shade[ball color=\color, draw=black] (\point) circle (4pt);
		}
		
		\node[above right=2pt] at (M1)  {$\theta_1$};
		\node[above left=3pt]  at (M2)  {$\theta_2$};
		\node[below left=1pt]  at (M3)  {$\theta_3$};
		\draw[->] (1.4,0) arc (0:30:1.1);
	\end{tikzpicture}
	\caption{Three masses on a ring connected by springs. The system requires two iterations of the FJ algorithm to reach $\mathcal{M}^*$.}
	\label{fig:ring_system}
\end{figure}

The symbolic engine required two iterations (\textit{IterationCount: 2}) to fully regularize the matrix. We present the final inverse matrix in a factored form to highlight its structural symmetries~\cite{ArnoldBook,Marsden1999}, using the shorthand $S_i \equiv \sin(\theta_i)$ and $C_i \equiv \cos(\theta_i)$:

\begin{equation}
	\label{eq:inverse_ring}
	\resizebox{0.75\textwidth}{!}{%
		$
		(f^{(2)})^{-1} = \frac{1}{3}
		\left(
		\begin{array}{cccccccccc}
			0 & 0 & 0 & 0 & 0 & 3 & 0 & 0 & 0 & 0 \\
			0 & 0 & 0 & 0 & 0 & 0 & 3 & 0 & 0 & 0 \\
			0 & 0 & 0 & 0 & 0 & 0 & 0 & 3 & 0 & 0 \\
			0 & 0 & 0 & 0 & 0 & -R S_1 & -R S_2 & -R S_3 & 0 & -\frac{1}{k} \\
			0 & 0 & 0 & 0 & 0 & R C_1 & R C_2 & R C_3 & -\frac{1}{k} & 0 \\
			-3 & 0 & 0 & R S_1 & -R C_1 & 0 & 0 & 0 & 0 & 0 \\
			0 & -3 & 0 & R S_2 & -R C_2 & 0 & 0 & 0 & 0 & 0 \\
			0 & 0 & -3 & R S_3 & -R C_3 & 0 & 0 & 0 & 0 & 0 \\
			0 & 0 & 0 & 0 & \frac{1}{k} & 0 & 0 & 0 & 0 & 0 \\
			0 & 0 & 0 & \frac{1}{k} & 0 & 0 & 0 & 0 & 0 & 0 \\
		\end{array}
		\right)
		$
	}
\end{equation}

\subsection{On the preservation of parametric factors}

A crucial aspect of our implementation is the deliberate preservation of global factors and physical parameters (such as $k$) throughout the symbolic reduction~\cite{Kuznetsov2004,Seydel2010}.
% Cambio desde donde dice This ensure...
As discussed in the context of bifurcation theory \cite{Kuznetsov2004,Guckenheimer1983}, the ability to detect branching points depends on the analytic properties of the defining system. In our framework, the symplectic matrix is regularized via structure-preserving operations~\cite{ArnoldBook,Marsden1999,Hairer2006} that retain the complete constraint architecture and its parametric dependencies. This ensures that the emergent brackets faithfully reflect the underlying parameter space~\cite{Henneaux1992,Rothe1997}, allowing the researcher to identify bifurcation conditions when critical parameters are varied, as well as to assess structural stability in regimes where no such bifurcations occur~\cite{Kuznetsov2004,Wiggins2003}. By treating parameters as symbolic invariants, the final manifold $\mathcal{M}^*$ serves as a robust foundation for stability analysis.

\section{Computational design and symbolic interface}\label{sec5}

The bridge between the geometric theory of constrained systems and its algorithmic realization lies in the design of the symbolic interface~\cite{Wolfram2002,Gorard2024}. Rather than treating the Lagrangian as a mere mathematical expression to be parsed procedurally, our implementation adopts a {declarative data-driven paradigm}. We define the physical system not as a function of time, but as a static association of structural properties mapping the tangent bundle $TQ$ to the symplectic manifold~\cite{ArnoldBook,Marsden1999}.

\subsection{The declarative system definition}

In standard computer algebra systems (CAS)~\cite{Fateman1991,Geddes1992}, Lagrangians are often defined with explicit time dependencies (e.g., $q_1(t), \dot{q}_1(t)$), enforcing the symbolic engine to carry the overhead of functional calculus throughout the reduction. We depart from this convention by treating phase space coordinates as atomic algebraic symbols. The temporal evolution is implicit, emerging only when the {exterior derivative operator} acts upon the symplectic structure~\cite{ArnoldBook,Nakahara2003}.

This design choice drastically reduces the syntactic noise and computational complexity~\cite{Wolfram2002}. The user defines the system as an \texttt{Association}, a key-value data structure that encapsulates the kinetic term, the symplectic potential, and the configuration variables. This separation allows the bordering engine to compute canonical momenta and Hessian blocks independently before assembling the pre-symplectic two-form~\cite{FaddeevJackiw1988,BarcelosNetoWotzasek1992}.

\subsection{Input syntax and implicit dynamics}

To illustrate the compactness of the framework, we present the declaration of the singular system. The absence of explicit time arguments \texttt{[t]} reflects that the engine treats the velocities as independent atomic symbols, i.e.\ as coordinates on the first-order jet bundle $J^1(Q)$ (equivalently the tangent bundle $TQ$), on which the pre-symplectic one-form is defined; no variational calculus is performed at this stage~\cite{Olver1993,Saunders1989}.

\begin{figure}[h]
	\flushleft
	% Aquí va la definición de Hojman-Urrutia
	\includegraphics[width=0.65\textwidth]{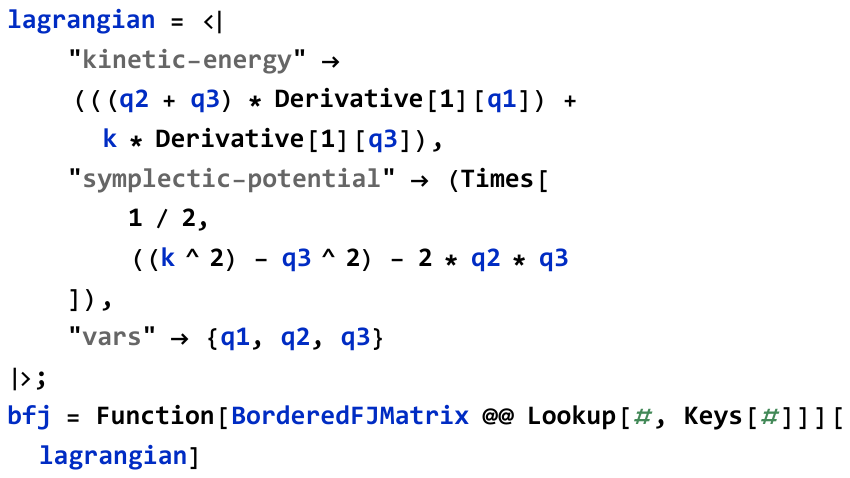} 
	\label{hojman}
\end{figure}

In this framework, the velocities are not represented as differentiated dynamical functions $\dot q_i(t)$ but as independent fiber coordinates on the jet bundle~\cite{Olver1993,Saunders1989}. As evidenced in the symbolic declaration of the above system, the kinetic sector is defined using the native Wolfram syntax \texttt{Derivative[1][q1]} and \texttt{Derivative[1][q3]}, which serve here as atomic symbols standing for $\dot{q}_1$ and $\dot{q}_3$, respectively. The \texttt{BorderedFJMatrix} function ingests this immutable state directly from the \texttt{Association}. Treating these symbols as coordinates rather than as operators, the engine constructs the initial pre-symplectic matrix $f^{(0)}$ and recursively applies the structure-preserving bordering theorem derived in Section \ref{sec2}, ensuring that the reduction process remains faithful to the underlying manifold geometry~\cite{Marsden1999,ArnoldBook}.

\subsection{Structured output and symbolic encapsulation}

Handling the symplectic geometry of high-dimensional systems generates massive algebraic expressions that can easily clutter the workspace. To mitigate this, our engine adheres to the \textit{Principle of Minimal Cognitive Load}~\cite{Sweller1988,Chandler1991}. Upon convergence, the function returns an \textbf{opaque symbolic object} formatted as a standard Wolfram Language \texttt{SummaryBox}.

When the system is successfully regularized, this interface provides an immediate, high-level view of the reduction's outcome (see Fig. \ref{fig:summarybox}). The summary box displays the principal structural data of the reduction:
\begin{itemize}
	\item The \textbf{Regularity Status} (indicating a successful transition to a symplectic manifold~\cite{ArnoldBook}).
	\item The \textbf{Extended Dimension} of the phase space ($2N \times 2N$).
	\item The \textbf{Constraint Count} and \textbf{Iteration Depth}.
\end{itemize}

\begin{figure}[h]
	\flushleft
	% Aquí iría tu captura de la caja bonita
	\includegraphics[width=0.65\textwidth]{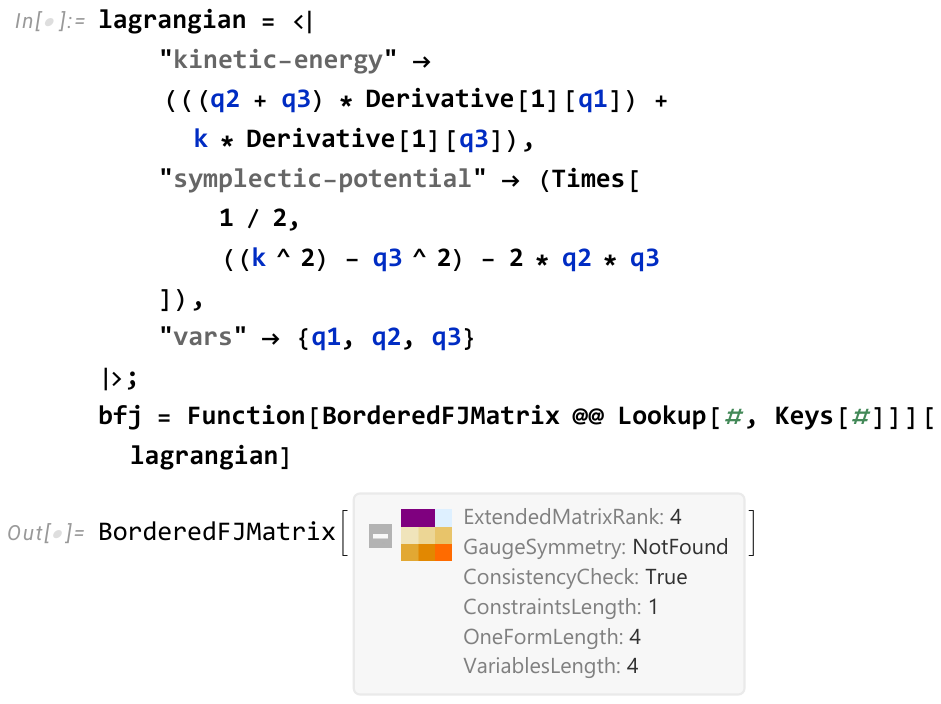} 
	\caption{The output object displayed as a SummaryBox. It encapsulates the full symplectic hierarchy while exposing only essential metadata. The ``+'' icon allows for interactive drill-down into the constraint algebra without recomputing the manifold.}
	\label{fig:summarybox}
\end{figure}

To reveal the internal architecture of the computed manifold, the user can query the object using the \texttt{"Properties"} key. This command generates a comprehensive list of the symplectic and algebraic components stored within the encapsulated state (see Fig. \ref{fig:properties_list}).

\begin{figure}[h]
	\flushleft
	\includegraphics[width=0.35\textwidth]{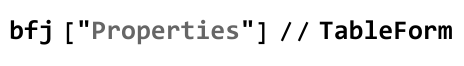}\\[0.2em]
	\includegraphics[width=0.35\textwidth]{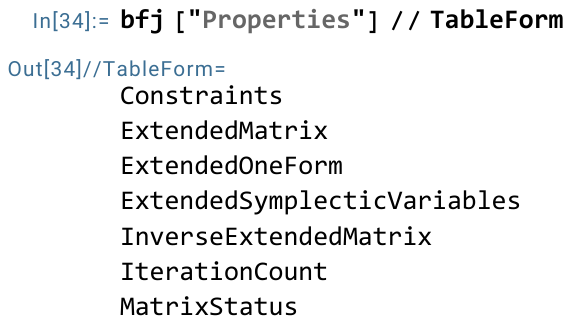}
	\caption{Interactive query of the object's metadata. 
		Top: summary of available properties returned by the \texttt{"Properties"} key. 
		Bottom: internal structure exposed through the association interface.}
	\label{fig:properties_list}
\end{figure}

Despite this concise visual representation, the object behaves as a queryable association. Underlying the summary interface, the full structural state of the final manifold $\mathcal{M}^*$ is preserved and accessible via specific string-based keys:

\begin{itemize}
	\item \texttt{"Constraints"}: A list of the geometric constraints $\Omega_\alpha$ generated sequentially via the consistency conditions~\cite{Dirac1964,BarcelosNetoWotzasek1992}.
	\item \texttt{"ExtendedMatrix"}: The final symplectic two-form $f^{(m)}$ bordering the original sector with the constraint gradients~\cite{FaddeevJackiw1988}.
	\item \texttt{"ExtendedOneForm"}: The canonical one-form $a^{(m)}_i \dd \xi^i$ associated with the regularized Lagrangian~\cite{ArnoldBook,Marsden1999}. This vector encapsulates the modified symplectic potentials, including the terms derived from the Lagrange multipliers.
	\item \texttt{"ExtendedSymplecticVariables"}: The augmented phase space vector $\xi^{(m)}$, including the necessary Lagrange multipliers required to close the algebra~\cite{Henneaux1992,GitmanTyutin1990}.
	%Cambiado 29/01/2026
	\item \texttt{"InverseExtendedMatrix"}: The crucial outcome of the regularization process, providing the generalized symplectic brackets of the theory~\cite{Dirac1964,FaddeevJackiw1988}.
	%Cambiado 29/01/2026
	\item \texttt{"IterationCount"}: An integer specifying the iteration step at which a regular symplectic matrix is obtained.
	\item \texttt{"MatrixStatus"}: A status flag (\texttt{"Regular"} or \texttt{"Singular"}) confirming the invertibility of the 2-form and the successful reach of the symplectic manifold~\cite{Gotay1978,ArnoldBook}.
\end{itemize}

This design decouples the representation of the physical system from its implementation~\cite{Wolfram2002,Gorard2024}. For instance, extracting the fundamental brackets reduces to a simple key lookup:
\begin{verbatim}
	(* Extracting the Generalized Symplectic Brackets *)
	DiracStructure = bfj["InverseExtendedMatrix"];
	
	(* Analyzing the constraint algebra *)
	PhysicalConstraints = bfj["Constraints"];
\end{verbatim}
%\vspace{-1em}
By returning a comprehensive state object, the engine facilitates seamless integration with downstream analysis tools, such as stability analyzers or tensor calculus packages~\cite{mgarcia,Nakahara2003}.

\subsection{Automated detection of gauge symmetries}

A critical feature of the matrix bordering engine is its ability to flag the presence of candidate gauge redundancies~\cite{Henneaux1992,Rothe1997}. In the Faddeev-Jackiw framework, a system that fails to reach a regular symplectic manifold $\mathcal{M}^*$ after the exhaustion of consistency conditions retains a nontrivial kernel of the final pre-symplectic form, which is a necessary signature of first-class constraints in the Dirac-Bergmann nomenclature~\cite{Dirac1964,GitmanTyutin1990}. We emphasize, however, that a residual kernel is not by itself sufficient to establish a gauge symmetry: the surviving null modes $v$ must additionally satisfy the consistency condition $v^i\partial_i V = 0$ so that no further constraint is generated (if this fails, the halt reflects an \emph{inconsistent} system rather than a gauge one), and the associated constraints must be verified to be first-class. Accordingly, our engine \emph{detects and exposes the candidate null generators} $v\in\ker(f^{(m)})$, leaving to the user the physical confirmation that they generate genuine symmetries of the action. The reduction has three mutually exclusive outcomes at a halt: a fixed Lagrange multiplier (second-class direction), a null mode satisfying $v^i\partial_i V=0$ (gauge direction), or a null mode with $v^i\partial_i V\neq 0$ (inconsistency).

Our implementation monitors the iterative process and halts when it detects that no further independent geometric information can be extracted from the Lagrangian. The engine reports these structural signatures through two specific diagnostic messages:

\begin{itemize}
	\item \textbf{``Null constraint detected: all new constraints are identically zero.''} This occurs when the consistency condition $\dot{\Omega} = 0$ is satisfied identically on the current manifold, implying that the existing constraints are already stable under the Hamiltonian flow~\cite{ArnoldBook,Marsden1999}.
	\item \textbf{``Dependent constraints detected: new constraints are linearly dependent on existing ones.''} This indicates that while new constraints are generated, they do not restrict the phase space further, revealing a redundancy in the constraint algebra~\cite{Henneaux1992,GitmanTyutin1990}.
\end{itemize}

To illustrate this diagnostic capability, we analyzed a variant of the four-mass system introduced by Brown~\cite{Brown2023} (see Fig.~\ref{fig:gauge_system}), in which certain elastic couplings are removed, thereby introducing a global translational symmetry~\cite{Goldstein2002,Marsden1999}.

\begin{figure}[H]
	\centering
	\begin{tikzpicture}[
		scale=2,
		% Perspectiva isométrica aproximada
		x={(-0.5cm,-0.3cm)},
		y={(1cm,0cm)},
		z={(0cm,1cm)},
		line cap=round,
		line join=round
		]
		
		%------------------------------------------------
		% 1) Definir las placas: piso (z=0) y techo (z=2)
		%------------------------------------------------
		% Piso: rectángulo de (0,0,0) a (3,2,0)
		\coordinate (B1) at (0,0,0);
		\coordinate (B2) at (3,0,0);
		\coordinate (B3) at (3,2,0);
		\coordinate (B4) at (0,2,0);
		% Techo: rectángulo de (0,0,2) a (3,2,2)
		\coordinate (T1) at (0,0,2);
		\coordinate (T2) at (3,0,2);
		\coordinate (T3) at (3,2,2);
		\coordinate (T4) at (0,2,2);
		
		\fill[gray!50,draw=black] (B1) -- (B2) -- (B3) -- (B4) -- cycle;
		\fill[gray!50,draw=black] (T1) -- (T2) -- (T3) -- (T4) -- cycle;
		
		%------------------------------------------------
		% 2) Calcular centros y desplazar los postes hacia el interior
		%------------------------------------------------
		\coordinate (Bcenter) at (1.5,1,0);
		\coordinate (Tcenter) at (1.5,1,2);
		\def\alpha{0.2} % Factor de desplazamiento (0: en el vértice; 1: en el centro)
		
		\coordinate (B1a) at ($(B1)! \alpha ! (Bcenter)$);
		\coordinate (B2a) at ($(B2)! \alpha ! (Bcenter)$);
		\coordinate (B3a) at ($(B3)! \alpha ! (Bcenter)$);
		\coordinate (B4a) at ($(B4)! \alpha ! (Bcenter)$);
		\coordinate (T1a) at ($(T1)! \alpha ! (Tcenter)$);
		\coordinate (T2a) at ($(T2)! \alpha ! (Tcenter)$);
		\coordinate (T3a) at ($(T3)! \alpha ! (Tcenter)$);
		\coordinate (T4a) at ($(T4)! \alpha ! (Tcenter)$);
		
		% Dibujar los postes (varillas verticales) en gris
		%\draw[line width=2pt, line cap=round, gray] (B1a) -- (T1a);
		\draw[line width=2pt, line cap=round, gray!40,
		preaction={draw, line width=3pt, black}]
		(B1a) -- (T1a);
		%\draw[line width=2pt, line cap=round, gray] (B2a) -- (T2a);
		\draw[line width=2pt, line cap=round, gray!40,
		preaction={draw, line width=3pt, black}]
		(B2a) -- (T2a);
		%\draw[line width=2pt, line cap=round, gray] (B3a) -- (T3a);
		\draw[line width=2pt, line cap=round, gray!40,
		preaction={draw, line width=3pt, black}]
		(B3a) -- (T3a);
		%\draw[line width=2pt, line cap=round, gray] (B4a) -- (T4a);
		\draw[line width=2pt, line cap=round, gray!40,
		preaction={draw, line width=3pt, black}]
		(B4a) -- (T4a);
		
		%------------------------------------------------
		% 3) Definir los puntos de conexión en cada poste (y_i) con alturas distintas
		%------------------------------------------------
		\pgfmathsetmacro{\fYone}{0.70}  % En el poste del vértice 1
		\pgfmathsetmacro{\fYtwo}{0.60}  % En el poste del vértice 2
		\pgfmathsetmacro{\fYthree}{0.65} % En el poste del vértice 3
		\pgfmathsetmacro{\fYfour}{0.55}  % En el poste del vértice 4
		
		\coordinate (Y1) at ($(T1a)!\fYone!(B1a)$);
		\coordinate (Y2) at ($(T2a)!\fYtwo!(B2a)$);
		\coordinate (Y3) at ($(T3a)!\fYthree!(B3a)$);
		\coordinate (Y4) at ($(T4a)!\fYfour!(B4a)$);
		
		% Dibujar los puntos negros con etiquetas
		\fill (Y1) circle (0.8pt)
		node[above=3pt,left] {$y_1$};
		
		\fill (Y2) circle (0.8pt)
		node[left] {$y_2$};
		
		% y3: desplazar etiqueta hacia "afuera" y arriba
		\fill (Y3) circle (0.8pt)
		node at ($(Y3)+(0.58,0.14,0.05)$) {$y_3$};
		
		% y4: desplazar etiqueta hacia el frente-izquierda y arriba
		\fill (Y4) circle (0.8pt)
		node at ($(Y4)+(-0.12,0.12,0)$) {$y_4$};
		
		%------------------------------------------------
		% 4) Conectar los puntos y_i con varillas verdes (perímetro irregular)
		%------------------------------------------------
		%Original
		%\draw[very thick,teal] (Y1) -- (Y2) -- (Y3) -- (Y4) -- cycle;
		\draw[thick, double=teal!50] (Y1) -- (Y2) -- (Y3) -- (Y4) -- cycle;
		
		%------------------------------------------------
		% 5) Colocar las masas (bolas rojas) en los puntos medios de cada varilla verde
		%------------------------------------------------
		\coordinate (M1) at ($ (Y1)!0.5!(Y2) $);
		\coordinate (M2) at ($ (Y2)!0.5!(Y3) $);
		\coordinate (M3) at ($ (Y3)!0.5!(Y4) $);
		\coordinate (M4) at ($ (Y4)!0.5!(Y1) $);
		\shade[ball color=red] (M1) circle (3pt);
		\shade[ball color=red] (M2) circle (3pt);
		\shade[ball color=red] (M3) circle (3pt);
		\shade[ball color=red] (M4) circle (3pt);
		
		%------------------------------------------------
		% 6) Conectar cada masa con un resorte que va al techo
		%     Se conecta desde la masa hasta la proyección vertical en el techo,
		%     es decir, se sube  la diferencia entre z(M_i) y 2.
		%     Calculamos la coordenada z de cada masa como el promedio de las z de los y_i correspondientes.
		%------------------------------------------------
		\pgfmathsetmacro{\zMone}{((2-2*\fYone)+(2-2*\fYtwo))/2}
		\pgfmathsetmacro{\zMtwo}{((2-2*\fYtwo)+(2-2*\fYthree))/2}
		\pgfmathsetmacro{\zMthree}{((2-2*\fYthree)+(2-2*\fYfour))/2}
		\pgfmathsetmacro{\zMfour}{((2-2*\fYfour)+(2-2*\fYone))/2}
		
		\coordinate (M1_top) at ($(M1)+(0,0,{2-\zMone})$);
		\coordinate (M2_top) at ($(M2)+(0,0,{2-\zMtwo})$);
		\coordinate (M3_top) at ($(M3)+(0,0,{2-\zMthree})$);
		\coordinate (M4_top) at ($(M4)+(0,0,{2-\zMfour})$);
		
		\draw[blue, decorate, decoration={coil, segment length=5pt, amplitude=4pt}] (M1_top) -- (M1);
		\draw[blue, decorate, decoration={coil, segment length=5pt, amplitude=4pt}] (M2_top) -- (M2);
		\draw[blue, decorate, decoration={coil, segment length=5pt, amplitude=4pt}] (M3_top) -- (M3);
		\draw[blue, decorate, decoration={coil, segment length=5pt, amplitude=4pt}] (M4_top) -- (M4);
		
		% Opcional: marcar los puntos de conexión en el techo
		\fill (M1_top) circle (0.8pt);
		\fill (M2_top) circle (0.8pt);
		\fill (M3_top) circle (0.8pt);
		\fill (M4_top) circle (0.8pt);
		
	\end{tikzpicture}
	\caption{Variant of the mechanical system with emergent gauge symmetry. The absence of specific anchoring springs leads to a nontrivial kernel in the symplectic matrix, which the engine identifies as a translational gauge redundancy.}
	\label{fig:gauge_system}
\end{figure}

In such cases, the \texttt{MatrixStatus} returns \texttt{"Singular"}, and the engine preserves the final pre-symplectic matrix and its null vectors~\cite{Henneaux1992,GitmanTyutin1990}. This provides the researcher with the exact generators of the gauge transformations, facilitating the subsequent application of gauge-fixing conditions or the transition to a reduced phase space~\cite{Marsden1999,MarsdenWeinstein1974,ArnoldBook}. This transition from automated reduction to symmetry diagnostics ensures that the engine remains a reliable tool even when the symplectic structure is not globally invertible.

\subsection{Advantages of the association structure}

Encoding the physical model as an association offers three distinct advantages over functional definitions:
\begin{enumerate}
	\item \textbf{Modularity:} Constraints or external fields can be injected into the association as new keys without altering the core reduction code~\cite{Wolfram2002}.
	\item \textbf{Symbolic Hygiene:} By avoiding \texttt{q[t]}, we prevent the CAS from prematurely attempting to solve differential equations~\cite{Geddes1992,Fateman1991}, keeping the focus strictly on the algebraic topology of the constraints.
	\item \textbf{Traceability:} The input structure mirrors the mathematical definition of the Lagrangian $L = T - V$, ensuring that the computational object is isomorphic to the physical theory~\cite{ArnoldBook,Goldstein2002}.
\end{enumerate}

\begin{remark} From a computational standpoint, this immutable symbolic architecture is the operational counterpart to the geometric constraints established in Theorem \ref{theo1}. Unlike numerical solvers that may obscure singularities through rounding~\cite{Higham2002,Trefethen1997}, the association-based engine propagates symbolic dependencies exactly. This ensures that the structural regularization discussed in Section \ref{sec2} is preserved at the software level, maintaining full visibility of potential degeneracy loci and bifurcation conditions throughout the automated workflow~\cite{Kuznetsov2004,Seydel2010}. \end{remark}

This architectural decision transforms the Faddeev-Jackiw reduction from a manual calculation aid into a robust \textit{computational engine} capable of handling high-dimensional singular systems with minimal user intervention.

\section{Conclusion and future outlook}\label{6}

In this work, we have established a rigorous equivalence between the Faddeev-Jackiw symplectic reduction and the Matrix Bordering Technique. By proving that the iterative extension of the phase space is a geometrically constrained instance of matrix bordering (Theorem \ref{theo1}), we have provided a solid algebraic foundation for the automation of constrained dynamics~\cite{GolubVanLoan2013,Galantai2001}. This theoretical insight allowed us to construct a symbolic engine in \textit{Mathematica} that successfully regularizes singular Lagrangians without relying on heuristic simplifications often introduced in procedural implementations~\cite{Wolfram2002,Geddes1992}.

Our validation on representative singular systems~\cite{HojmanUrrutia1981,Brown2023} demonstrates that the engine correctly handles subtle dependencies in the constraint algebra, recovering the physical symplectic manifold $\mathcal{M}^*$ and preserving the parametric structures essential for stability analysis~\cite{Kuznetsov2004,Guckenheimer1983}.

%	\subsection*{Towards Field Theories and Tensor Calculus}

%	While the current implementation focuses on finite-dimensional systems (point mechanics), the algebraic architecture presented herein serves as the prototypical kernel for a more ambitious extension towards field theories. The matrix bordering logic established in Section \ref{sec2} is structurally agnostic to the dimensionality of the phase space. In the transition to continuum mechanics or gauge theories, the symplectic matrices generalize to integro-differential operators, and the partial derivatives lift to variational derivatives.

%	Consequently, this work lays the groundwork for a generalized "Tensor Faddeev--Jackiw Engine." Future developments will aim to integrate this bordering logic with symbolic tensor calculus packages (such as \textit{xAct}~\cite{mgarcia} or native tensorial tools). Such an extension would allow the symbolic automaton to handle systems with infinitely many degrees of freedom—such as Maxwell or Yang–Mills theories—by treating indices abstractly and performing the bordering operations directly on the operator algebra. In this perspective, constraint handling emerges from the algebraic structure of the symplectic form itself, rather than from a case-by-case classification. 	We argue that the algorithmic robustness demonstrated here for finite-dimensional systems constitutes a necessary prerequisite for navigating the infinite-dimensional constraint surfaces of relativistic field theories, where structural transparency and symbolic control are essential.

\subsection*{Towards Field Theories and Tensor Calculus}

While the current implementation focuses on finite-dimensional systems (point mechanics), the algebraic architecture developed in this work is conceived as a prototypical kernel for future extensions toward field theories. In particular, the matrix bordering logic established in Section~\ref{sec2} does not rely on any finite-dimensional assumption per se, but rather on structural properties of the underlying symplectic form and its associated constraint operators~\cite{ArnoldBook,Marsden1999}.

In the transition from point mechanics to continuum systems or gauge field theories~\cite{Witten1989,Henneaux1992}, the finite-dimensional symplectic matrices are naturally replaced by integro-differential operators acting on appropriate functional spaces, while partial derivatives lift to variational derivatives~\cite{Olver1993,Saunders1989}. At this level, the relevant mathematical setting is no longer matrix algebra but operator theory on Hilbert or Banach spaces. It is therefore important to emphasize that operator-level analogues of the Schur complement and matrix bordering constructions have been systematically studied in the functional-analytic literature, under suitable hypotheses of boundedness, complementability, and domain compatibility \cite{Bacuta2009,DouglasFillmore1974}. This body of work provides a rigorous analytical backdrop for exploring infinite-dimensional extensions of the present algebraic framework.

From this perspective, the present contribution may be viewed as establishing the finite-dimensional backbone of a prospective \emph{Tensor Faddeev-Jackiw Engine}. Future developments will aim to integrate the current bordering logic with symbolic tensor calculus frameworks, such as \textit{xAct}~\cite{mgarcia} or native tensorial tools~\cite{Nakahara2003}, in order to treat indices abstractly and to perform the bordering operations directly at the level of operator-valued symplectic forms. Such an extension would enable the systematic symbolic handling of constrained field theories, including Maxwell and Yang-Mills models~\cite{Witten1989,Schwarz1993}, while preserving the structural transparency that is essential in relativistic contexts.

Finally, we stress that the algorithmic robustness demonstrated here in the finite-dimensional setting constitutes a necessary prerequisite for addressing infinite-dimensional constraint surfaces. In field-theoretic applications, where gauge symmetries~\cite{Henneaux1992,Rothe1997}, functional degeneracies, and bifurcation phenomena may coexist, maintaining symbolic control and structural clarity is indispensable~\cite{Kuznetsov2004,Seydel2010}. The present work therefore does not claim to resolve the infinite-dimensional case, but rather to provide a mathematically coherent and computationally disciplined foundation upon which such extensions can be built.

\subsection*{Software availability}

The symbolic engine developed in this work, \texttt{BorderedFJReduction}, is implemented as a Wolfram Language package. It provides a complete, self-contained realization of the matrix bordering formulation of the Faddeev-Jackiw algorithm, together with a structured and queryable interface for constructing and analyzing the resulting symplectic data.

The package is distributed as a standard Wolfram Language paclet and is publicly available through the project's GitHub repository:

\begin{center}
	\url{https://github.com/echanlopez/BorderedFJReduction}
\end{center}

The repository contains the complete package source code, documentation notebooks, validation examples, and supporting material for reproducing the computational results presented in this work. The implementation is designed to preserve the symbolic and parametric structure of the reduction throughout the computation, allowing the resulting framework to be used both as a computational engine and as a tool for symbolic exploration of constrained systems.

The software is released under the MIT License. For long-term preservation, versioned archiving, and persistent citation, the \texttt{BorderedFJReduction} v0.1.2 release is deposited in Zenodo and is assigned the following persistent DOI:

\begin{center}
	\url{https://doi.org/10.5281/zenodo.18487436}
\end{center}

The Zenodo record provides a citable, version-specific archive of the software corresponding to the release described in this work.

\subsection*{Appendix A. Determinantal factorization and kernel--Poisson identity: proofs}

This appendix provides the technical backbone of Theorem~\ref{theo1}. We prove a determinantal factorization lemma for antisymmetric bordered matrices with singular anchor block (Lemma~\ref{lem:rank}) and an exact coefficient-by-coefficient identification of the reduced constraint matrix with the Hessian of the symplectic potential restricted to $\ker(f^{(0)})$---equivalently, the Faddeev--Jackiw bracket matrix of the generated constraints (Proposition~\ref{prop:schurpoisson}).

\medskip
\noindent\textbf{Notation.}
For a real antisymmetric matrix $A \in \mathbb{R}^{n\times n}$ of rank $r$, there exists an orthogonal matrix $Q$ such that $Q^{\top} A Q = \mathrm{diag}(J, 0)$, where $J \in \mathbb{R}^{r\times r}$ is block-diagonal with $r/2$ invertible $2\times 2$ symplectic blocks $\mu_\ell \begin{psmallmatrix} 0 & 1 \\ -1 & 0 \end{psmallmatrix}$, $\mu_\ell > 0$. The last $n-r$ columns of $Q$ span $\ker(A)$ and furnish the orthonormal basis $N$ used throughout.

\begin{lemma}[Determinantal factorization for antisymmetric bordered matrices]\label{lem:rank}
	Let $f^{(0)} \in \mathbb{R}^{n \times n}$ be real antisymmetric with $\mathrm{rank}(f^{(0)}) = r = n - d$, $d = \dim\ker(f^{(0)}) \geq 1$, and let $B \in \mathbb{R}^{n \times k}$. Let $N \in \mathbb{R}^{n\times d}$ have orthonormal columns spanning $\ker(f^{(0)})$, and define the \emph{reduced constraint matrix}
	\[
	\Gamma \;\equiv\; N^{\top} B \;\in\; \mathbb{R}^{d\times k}.
	\]
	Then the extended symplectic matrix
	\[
	f^{(m)} \;=\; \begin{pmatrix} f^{(0)} & B \\ - B^{\top} & 0 \end{pmatrix}
	\]
	satisfies $\det(f^{(m)}) = \det(J)\,\det(R)$, where $J$ is the nonsingular block of $f^{(0)}$ and $R=\begin{psmallmatrix} 0 & \Gamma \\ -\Gamma^{\top} & M\end{psmallmatrix}$ with $M$ antisymmetric. In particular, if $k = d$ then
	\begin{equation}\label{eq:rank-lemma}
		\det\bigl(f^{(m)}\bigr) \;=\; \Bigl(\textstyle\prod_{\ell=1}^{r/2}\mu_\ell^{2}\Bigr)\,\det(\Gamma)^2 ,
	\end{equation}
	where $\mu_\ell>0$ are the Pfaffian factors of $f^{(0)}$. Consequently $f^{(m)}$ is nonsingular (i.e.\ of rank $n+k$) if and only if:
	\begin{enumerate}
		\item[\textup{(i)}] $k = d = n - r$, and
		\item[\textup{(ii)}] $\det(\Gamma) \neq 0$, i.e.\ $\mathrm{rank}(\Gamma) = d$,
	\end{enumerate}
	that is, if and only if the number of independent generated constraints equals $\dim\ker(f^{(0)})$ and their gradients pair non-degenerately with the null space of $f^{(0)}$.
\end{lemma}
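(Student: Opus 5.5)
The plan is to reduce $f^{(m)}$ by an orthogonal congruence that exposes the nonsingular core $J$ of $f^{(0)}$, and then eliminate $J$ with a Schur complement taken on $J$ itself, which is now invertible. Take the orthogonal $Q=(Q_r\;N)$ from the Notation paragraph, so that $Q^{\top}f^{(0)}Q=\mathrm{diag}(J,0)$. Conjugate $f^{(m)}$ by $P=\mathrm{diag}(Q,I_k)$. Since $\det P^2=1$, this gives $\det f^{(m)}=\det(P^{\top}f^{(m)}P)$. Write $B_1=Q_r^{\top}B\in\mathbb{R}^{r\times k}$; note that $N^{\top}B=\Gamma$. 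The conjugated matrix then has the $3\times3$ block form
\[
P^{\top}f^{(m)}P=\begin{pmatrix} J & 0 & B_1\\ 0 & 0 & \Gamma\\ -B_1^{\top} & -\Gamma^{\top} & 0\end{pmatrix}.
\]

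Next, apply the Schur determinantal identity with respect to the invertible leading block $J$. The off-diagonal blocks are $C=\begin{psmallmatrix}0\\ -B_1^{\top}\end{psmallmatrix}$ and $D=\begin{pmatrix}0 & B_1\end{pmatrix}$. Their product $CJ^{-1}D$ is nonzero only in the lower-right $k\times k$ corner, where it equals $-B_1^{\top}J^{-1}B_1$. This yields $\det f^{(m)}=\det(J)\det(R)$ with
\[
R=\begin{pmatrix}0&\Gamma\\-\Gamma^{\top}&M\end{pmatrix},\qquad M=B_1^{\top}J^{-1}B_1 .
\]
Here $M$ is antisymmetric because $J^{-1}$ is. From the canonical $2\times2$ blocks, $\det J=\prod_{\ell}\mu_\ell^2>0$.

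For $k=d$, the block $\Gamma$ is square. Interchanging the two column blocks turns $R$ into the block lower-triangular matrix $\begin{psmallmatrix}\Gamma&0\\M&-\Gamma^{\top}\end{psmallmatrix}$, at the cost of a sign $(-1)^{d^2}$. Its determinant is $(-1)^d\det(\Gamma)^2$, so $\det R=(-1)^{d^2+d}\det(\Gamma)^2=\det(\Gamma)^2$. This gives \eqref{eq:rank-lemma}, and condition (ii) follows immediately.

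For the necessity of (i), the case $k<d$ is easy. The first $d$ rows of $R$ are $\begin{pmatrix}0&\Gamma\end{pmatrix}$, which have rank at most $k<d$, so $R$ is singular.

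The main obstacle is the case $k>d$. When $d+k$ is odd, $R$ is an odd-order antisymmetric matrix, so $\det R=0$ automatically. When $d+k$ is even, however, the Schur complement $M$ can be nonsingular and carry the determinant. For example, take $n=3$, $r=2$, $d=1$, $k=3$ with $B_1$ of full rank. I would therefore first check whether the standing independence and irreducibility hypotheses on the constraints force $k\le d$. The natural argument is that each generated constraint corresponds to a distinct kernel direction of $f^{(0)}$ through $v^i\partial_iV=0$, so there are at most $d$ of them. If that argument holds, (i) follows. Otherwise I would restate (i) as holding under the assumption $k\le d$, which is the regime produced by the consistency condition.
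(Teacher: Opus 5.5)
Your reduction is the paper's. You apply the orthogonal congruence by $\mathrm{diag}(Q,I_k)$ and then eliminate the invertible core $J$; the paper uses a unit upper-triangular congruence $T$, which is exactly your Schur complement on $J$. Both give $\det f^{(m)}=\det(J)\det(R)$ with $M=B_1^{\top}J^{-1}B_1$. For $k=d$ the paper writes $\det R=\mathrm{Pf}(R)^2$ and removes $M$ by a further congruence that needs $\Gamma$ invertible. Your column-block swap to $\begin{psmallmatrix}\Gamma&0\\ M&-\Gamma^{\top}\end{psmallmatrix}$ is more elementary and gives $\det R=\det(\Gamma)^2$ uniformly, singular $\Gamma$ included. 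Your $k<d$ argument is also correct, and it fills a step the paper only asserts.

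Your doubt about $k>d$ is justified, and you should not try to close it: the ``only if'' half of (i) is false as stated. So your proposal proves everything in the lemma that is true. The paper reaches (i) by noting that a $d\times k$ matrix has a determinant only when square. But the factorization through $\det(\Gamma)^2$ was derived only for $k=d$; for $k\neq d$ one has just $\det(J)\det(R)$, and the parity remark handles only $k-d$ odd.

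Here is a concrete counterexample. Take $f^{(0)}=\begin{psmallmatrix}0&1&0\\-1&0&0\\0&0&0\end{psmallmatrix}$ and $B=I_3$, so $d=1$, $k=3$, $\Gamma=(0,0,1)$. Right-multiplying $f^{(m)}=\begin{psmallmatrix}f^{(0)}&I_3\\-I_3&0\end{psmallmatrix}$ by $\begin{psmallmatrix}I_3&0\\-f^{(0)}&I_3\end{psmallmatrix}$ gives $\begin{psmallmatrix}0&I_3\\-I_3&0\end{psmallmatrix}$, so $\det f^{(m)}=1$.

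The columns of this $B$ are independent, so the independence hypothesis cannot force $k\le d$. ``One constraint per null direction'' is an extra assumption about how a single Barcelos-Neto--Wotzasek step generates constraints. A lemma about arbitrary $B$ cannot derive it. Your fallback, imposing $k\le d$ in (i), is the right repair. The sharp criterion is: $f^{(m)}$ is nonsingular iff $\mathrm{rank}\,\Gamma=d$ and $Z^{\top}MZ$ is nonsingular, where the columns of $Z$ span $\ker\Gamma$. This forces $k\ge d$ and $k\equiv d\pmod 2$, but allows $k>d$.

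Two corrections to your sketch of the example. First, for $d=1$, $k=3$ the block $M$ is $3\times 3$ antisymmetric, hence always singular, so $M$ alone does not carry the determinant. Second, full rank of $B_1$ is not sufficient. With $\Gamma=(\gamma_1,\gamma_2,\gamma_3)$ one has $\mathrm{Pf}(R)=\gamma_1M_{23}-\gamma_2M_{13}+\gamma_3M_{12}$, which vanishes e.g.\ for $\Gamma=0$. In the example above, $\mathrm{Pf}(R)=\gamma_3M_{12}=-1$.
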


\begin{proof}
	Since $f^{(0)}$ is real antisymmetric of rank $r$, there exists an orthogonal $Q \in \mathbb{R}^{n\times n}$ such that
	\[
	Q^{\top} f^{(0)} Q \;=\; \begin{pmatrix} J & 0 \\ 0 & 0 \end{pmatrix}, \qquad J = \bigoplus_{\ell=1}^{r/2}\mu_\ell\!\begin{psmallmatrix}0 & 1 \\ -1 & 0\end{psmallmatrix}\ \text{invertible},\quad \mu_\ell>0,
	\]
	and we may take the last $d$ columns of $Q$ to be the orthonormal kernel basis $N$. Partition $Q^{\top} B = (B_u^{\top}, B_w^{\top})^{\top}$ with $B_u \in \mathbb{R}^{r\times k}$ and $B_w \in \mathbb{R}^{d\times k}$; by construction $B_w = N^{\top}B = \Gamma$.
	
	Conjugating $f^{(m)}$ by $\mathrm{diag}(Q, I_k)$ yields
	\[
	\widetilde{f}^{(m)} \;=\; \begin{pmatrix} J & 0 & B_u \\ 0 & 0 & \Gamma \\ -B_u^{\top} & -\Gamma^{\top} & 0 \end{pmatrix}.
	\]
	Because $J$ is invertible, the congruence with the unit upper-triangular
	$T = \begin{psmallmatrix} I_r & 0 & -J^{-1}B_u \\ 0 & I_{d} & 0 \\ 0 & 0 & I_k \end{psmallmatrix}$
	(with $\det T = 1$) eliminates the $(1,3)$ and $(3,1)$ blocks:
	\[
	T^{\top}\, \widetilde{f}^{(m)}\, T \;=\; \begin{pmatrix} J & 0 & 0 \\ 0 & 0 & \Gamma \\ 0 & -\Gamma^{\top} & M \end{pmatrix},
	\qquad M = B_u^{\top} J^{-1} B_u\ \text{(antisymmetric)}.
	\]
	Since congruence preserves the determinant up to $(\det T)^2 = 1$, we obtain $\det(f^{(m)}) = \det(J)\,\det(R)$ with $R = \begin{psmallmatrix} 0 & \Gamma \\ -\Gamma^{\top} & M \end{psmallmatrix}$ and $\det(J) = \prod_{\ell}\mu_\ell^{2}$.
	
	It remains to evaluate $\det(R)$ when $k=d$. Then $R$ is antisymmetric of even size $2d$, so $\det(R)=\mathrm{Pf}(R)^2$. The block congruence with $\begin{psmallmatrix} I_d & 0 \\ X & I_d \end{psmallmatrix}$, where $X$ is chosen so that $X\Gamma - (X\Gamma)^{\top} = M$ (solvable because $M$ is antisymmetric and $\Gamma$ invertible), maps $R$ to $\begin{psmallmatrix} 0 & \Gamma \\ -\Gamma^{\top} & 0 \end{psmallmatrix}$ without changing its Pfaffian; hence $\mathrm{Pf}(R) = (-1)^{d(d-1)/2}\det(\Gamma)$ and $\det(R)=\det(\Gamma)^2$, independently of the image block $M$. This proves~\eqref{eq:rank-lemma}. Since $\prod_\ell\mu_\ell^2>0$, $f^{(m)}$ is nonsingular iff $\det(\Gamma)\neq 0$; and $\Gamma\in\mathbb{R}^{d\times k}$ can have $\det\neq 0$ only when it is square, i.e.\ $k=d$, giving conditions (i) and (ii). If $k\neq d$ the size $n+k$ has the parity of $n-r+k$ and $\Gamma$ is not square, so $f^{(m)}$ is singular.
\end{proof}

\begin{remark}
	Condition (ii) of Lemma~\ref{lem:rank} is the precise meaning of the informal phrase that the bordering vectors be ``structurally compatible with the kernel of $f^{(0)}$''. It is the matrix-theoretic translation of the physical requirement that the consistency constraints genuinely restrict the dynamics on the singular locus: the constraint gradients, paired with the null vectors of $f^{(0)}$, must form a nondegenerate $d\times d$ matrix. The reduced constraint matrix $\Gamma = N^{\top}B$ isolates exactly this interaction between the bordering block and the null space of $f^{(0)}$; since the singular directions are the only ones that survive the elimination of the nonsingular symplectic core, $\Gamma$ contains precisely the information that governs the regularity of the bordered matrix.
\end{remark}

\medskip
\noindent\textbf{The reduced constraint matrix as the effective coupling matrix.}
The determinantal factorization~\eqref{eq:rank-lemma} exhibits $\Gamma$ as the effective coupling matrix that survives once the \emph{nonsingular} block $J$ of $f^{(0)}$ is eliminated: the congruence of Lemma~\ref{lem:rank} removes the invertible symplectic core and leaves the residual antisymmetric pairing $R=\begin{psmallmatrix}0 & \Gamma \\ -\Gamma^{\top} & M\end{psmallmatrix}$, whose determinant, when $k=d$, equals $\det(\Gamma)^2$ regardless of the image correction $M$. Thus
\begin{equation}\label{eq:genschur-appx}
	\det(f^{(m)}) \neq 0 \;\iff\; \det(\Gamma) \neq 0, \qquad \Gamma = N^{\top}B \in \mathbb{R}^{d\times d}\ (k=d),
\end{equation}
so the regularity of the extended symplectic matrix is fully controlled by the nonsingularity of the kernel pairing $\Gamma$: it is the reduced object whose nonsingularity decides whether the bordering restores a nondegenerate symplectic form.

\begin{proposition}[Kernel--Poisson identity]\label{prop:schurpoisson}
	Let $\Omega_\alpha$, $\alpha = 1, \ldots, d$, be the independent constraints generated by the Barcelos-Neto--Wotzasek consistency condition, so that (up to independent recombination) $\Omega_\alpha = N_\alpha^{\,i}\,\partial_i V$ with $N_\alpha$ the $\alpha$-th null vector of $f^{(0)}$, and let $B_{j\alpha} = \partial_j \Omega_\alpha$. Let $\Gamma = N^{\top}B$ be the reduced constraint matrix of Lemma~\ref{lem:rank}. Then
	\begin{equation}\label{eq:schurpoisson-appx}
		\Gamma_{\alpha\beta} \;=\; \sum_{i,j=1}^{n} N_\alpha^{\,i}\,\bigl(\partial_i\partial_j V\bigr)\,N_\beta^{\,j} \;=\; \{\Omega_\alpha, \Omega_\beta\}_{\mathrm{FJ}}
	\end{equation}
	as an equality of matrices, coefficient by coefficient, where $\{\Omega_\alpha,\Omega_\beta\}_{\mathrm{FJ}}$ denotes the Hessian of the symplectic potential restricted to $\ker(f^{(0)})$. In particular the reduced matrix $\Gamma$ is \emph{literally} the constraint matrix of the theory, not merely isomorphic or equivalent to it modulo terms vanishing on the constraint surface; and its nondegeneracy is the second-class condition in the Dirac--Bergmann sense.
\end{proposition}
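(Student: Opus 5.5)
The plan is to prove \eqref{eq:schurpoisson-appx} by expanding the definition $\Gamma = N^{\top}B$ of Lemma~\ref{lem:rank} entry by entry, with $B_{j\beta} = \partial_j\Omega_\beta$. The second equality in \eqref{eq:schurpoisson-appx} needs no argument: the statement defines $\{\Omega_\alpha,\Omega_\beta\}_{\mathrm{FJ}}$ as the Hessian of $V$ restricted to $\ker(f^{(0)})$. So all the work is in the first equality. From the definition,
\[
\Gamma_{\alpha\beta} \;=\; \sum_{i=1}^n N_\alpha^{\,i}\,\partial_i\Omega_\beta .
\]
Substituting the consistency form $\Omega_\beta = N_\beta^{\,j}\partial_j V$ and applying the Leibniz rule gives two pieces: the Hessian term $N_\alpha^{\,i}N_\beta^{\,j}\partial_i\partial_j V$, and a transport term $N_\alpha^{\,i}(\partial_i N_\beta^{\,j})\partial_j V$. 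The whole proof therefore reduces to showing that the transport term vanishes identically, not merely on the constraint surface, for the representative of the constraints that the statement allows.

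The phrase ``up to independent recombination'' is the lever I would use. Lemma~\ref{lem:rank} is a pointwise statement. At the phase-space point $\xi_0$ where regularity of $f^{(m)}$ is tested, $N$ is a fixed real matrix, namely the last $d$ columns of the orthogonal $Q$ that brings $f^{(0)}(\xi_0)$ to the form $\mathrm{diag}(J,0)$. I would fix the generated constraints by taking the representative
\[
\Omega_\beta(\xi) \;=\; N_\beta^{\,j}(\xi_0)\,\partial_j V(\xi),
\]
whose coefficients are the frozen kernel data of the Lemma. This is a legitimate choice: at $\xi_0$ these vectors span $\ker f^{(0)}(\xi_0)$, so they reproduce the same $d$ independent consistency conditions. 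With this choice the coefficients are constants, so $\partial_i N_\beta^{\,j}\equiv 0$ and the transport term disappears. What remains, evaluated at $\xi_0$, is exactly $\sum_{i,j}N_\alpha^{\,i}(\partial_i\partial_jV)N_\beta^{\,j}$. This is an exact matrix identity with no use of $\Omega=0$. As a by-product, $\Gamma$ is the congruence $N^{\top}(\partial^2V)N$ of a symmetric matrix, which is an internal consistency check.

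The step I expect to be the main obstacle is justifying that this choice of representative does not change the object whose determinant Lemma~\ref{lem:rank} controls. The Lemma only requires $B$ to be some $n\times k$ bordering block, and its conclusion depends on $B$ only through $N^{\top}B$. So I would argue as follows. Any other admissible recombination is $\Omega' = C\,\Omega$ with $C(\xi_0)$ invertible. Then $\Gamma' = \Gamma\,C(\xi_0)$ plus the term $N^{\top}(\partial C)\,\Omega$, and the latter is evaluated at $\xi_0$. I would try first to handle this term by exhibiting the frozen-coefficient representative as the canonical one singled out by the orthonormal basis in the statement. If that is not decisive, I would fall back to noting that $\det\Gamma' = \det\Gamma\,\det C(\xi_0)$ whenever the correction is absent. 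Either route identifies the statement's $\Gamma$, built from the frozen basis, literally with $N^{\top}(\partial^2V)N$. Finally, combining this identification with Lemma~\ref{lem:rank} shows that nondegeneracy of $\Gamma$ is the same as nondegeneracy of the constraint matrix, which is the Dirac--Bergmann second-class condition.
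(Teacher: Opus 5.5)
Your Leibniz expansion matches the paper's, and you correctly isolate the term that has to vanish, $N_\alpha^{\,i}(\partial_iN_\beta^{\,j})\,\partial_jV$. The gap is in how you remove it. The frozen representative $\tilde\Omega_\beta(\xi)=N_\beta^{\,j}(\xi_0)\,\partial_jV(\xi)$ is not an independent recombination of the generated constraints. Away from $\xi_0$ the frozen vectors are no longer null vectors of $f^{(0)}(\xi)$, so $\tilde\Omega_\beta$ picks up components of $\nabla V$ along $\mathrm{im}\,f^{(0)}$ and has a different zero set. At $\xi_0$ itself the values agree, but the gradients differ by exactly $(\partial_iN_\beta^{\,j})\,\partial_jV$, which is the term you set out to eliminate. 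The Faddeev--Jackiw algorithm fixes the bordering block as the gradient of the \emph{generated} constraints. Hence $N^\top\tilde B$ is not the $\Gamma$ of $f^{(1)}$, and Lemma~\ref{lem:rank} applied to $\tilde B$ computes the determinant of a different bordered matrix. Your recombination analysis ($\Omega'=C\Omega$) never reaches $\tilde\Omega$, because freezing is not of that form. Your fallback ``whenever the correction is absent'' assumes what is to be proved.

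A small example shows that no choice of representative can repair this step. Take $\xi=(x,y,z)$, $a=(0,\,x,\,-xz)$ and $V=y$. Then $f^{(0)}_{xy}=1$, $f^{(0)}_{xz}=-z$ and $f^{(0)}_{yz}=0$. The kernel is spanned by $v=(0,z,1)$, and the generated constraint is $\Omega=v^i\partial_iV=z$. With $B=(0,0,1)^\top$ the $4\times 4$ bordered matrix has Pfaffian $1$, so $f^{(1)}$ is regular. Correspondingly $\Gamma=N^\top B=(1+z^2)^{-1/2}\neq 0$, consistent with Lemma~\ref{lem:rank} since $\mu^2=1+z^2$. Yet $\partial_i\partial_jV\equiv 0$, so all of $\Gamma$ comes from the transport term: $v^i(\partial_iv^j)\,\partial_jV=\partial_yV=1$. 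Your frozen representative at an on-shell point ($z_0=0$, $N(\xi_0)=(0,0,1)$) is $\tilde\Omega=\partial_zV\equiv 0$. The constraint disappears and the bordered matrix becomes singular. What is missing is therefore a hypothesis, not a choice of representative. The paper's proof gets the identity by invoking constancy of the null vectors $N_\beta$. What is actually needed is $N_\alpha^{\,i}(\partial_iN_\beta^{\,j})\,\partial_jV=0$, which is guaranteed when $N$ is constant in a neighbourhood, for instance when $f^{(0)}$ has constant entries. Constancy on the constraint surface alone is not enough: in the example, $N=(0,0,1)$ along all of $z=0$. Under that hypothesis your computation goes through unchanged and coincides with the paper's.
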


\begin{proof}
	By $\Gamma = N^{\top}B$ and $B_{j\alpha}=\partial_j\Omega_\alpha$,
	\[
	\Gamma_{\alpha\beta} \;=\; \sum_{i} N_\alpha^{\,i}\,\partial_i\Omega_\beta \;=\; \sum_{i} N_\alpha^{\,i}\,\partial_i\bigl(N_\beta^{\,j}\partial_j V\bigr) \;=\; \sum_{i,j} N_\alpha^{\,i}\,\bigl(\partial_i\partial_j V\bigr)\,N_\beta^{\,j},
	\]
	using that the null vectors $N_\beta$ are constant on the singular locus. This is exactly the Hessian of $V$ contracted with the kernel basis, which is by definition $\{\Omega_\alpha,\Omega_\beta\}_{\mathrm{FJ}}$. The identity is exact and requires no approximation; it holds off-shell as a pointwise matrix equality on the phase space, not merely modulo the constraint ideal.
\end{proof}

\begin{remark}
	In the standard Dirac--Bergmann classification, the constraint matrix $\mathcal{C}_{\alpha\beta} = \{\Omega_\alpha, \Omega_\beta\}$ whose nondegeneracy defines a second-class system~\cite{Dirac1964,Henneaux1992,GitmanTyutin1990} is, for constraints generated by consistency, the Hessian of the symplectic potential restricted to $\ker(f^{(0)})$. Proposition~\ref{prop:schurpoisson} establishes that this matrix coincides exactly with the reduced pairing $\Gamma = N^{\top}B$, and Lemma~\ref{lem:rank} shows that $\Gamma$ controls the regularity of $f^{(m)}$ through the factorization $\det(f^{(m)}) = (\prod_\ell\mu_\ell^2)\det(\Gamma)^2$. Consequently, the termination of the Faddeev--Jackiw algorithm in a nondegenerate symplectic form is equivalent to a purely algebraic second-class condition:
	\[
	\det(f^{(m)}) \neq 0 \;\iff\; \det\{\Omega_\alpha, \Omega_\beta\}_{\mathrm{FJ}} \neq 0.
	\]
	The regularization mechanism is thus entirely structural; it arises from the nondegenerate pairing of the bordering block with $\ker(f^{(0)})$ rather than from any numerical inversion procedure.
\end{remark}

\section*{Acknowledgements}

E.C.-L. and J.M.C. were supported by the SECIHTI under the program ``Estancias Posdoctorales por México'' with CVU numbers 422090 and 376886, respectively. A.M.-R. acknowledges financial support by UNAM-PAPIIT project No.~IG100224, UNAM-PAPIME project No.~PE109226, by SECIHTI project No.~CBF-2025-I-1862 and by the Marcos Moshinsky Foundation.

\bibliographystyle{unsrt}

\end{document}